\newtheorem{clm}{Claim}
\newcommand{\keywords}[1]{\par\addvspace\baselineskip
\noindent\keywordname\enspace\ignorespaces#1}
\begin{document}

\mainmatter  % start of an individual contribution

% first the title is needed
\title{ Oblivious Location-Based Service Query }

% a short form should be given in case it is too long for the running head
\titlerunning{}

% the name(s) of the author(s) follow(s) next
%
% NB: Chinese authors should write their first names(s) in front of
% their surnames. This ensures that the names appear correctly in
% the running heads and the author index.
%
\author{Jinguang Han }
\authorrunning{Han \em{et. al}}
% (feature abused for this document to repeat the title also on left hand pages)

% the affiliations are given next; don't give your e-mail address
% unless you accept that it will be published
\institute{Centre for Secure Information Technologies (CSIT),\\
Institute of Electronics, Communications and Information Technology (ECIT),\\
Queen’s University Belfast, Belfast, Northern Ireland, BT3 9DT, United
Kingdom\\
\mailsa
}%
% NB: a more complex sample for affiliations and the mapping to the
% corresponding authors can be found in the file "llncs.dem"
% (search for the string "\mainmatter" where a contribution starts).
% "llncs.dem" accompanies the document class "llncs.cls".
%

\toctitle{}
 \tocauthor{} 
  \maketitle

 %\tableofcontents
%\newpage
%\clearpage

\begin{abstract}
Privacy-preserving location-base services (LBS)  have been proposed to protect users' location privacy. However, there are still some problems in existing schemes: (1) a semi-trusted third party (TTP) is required; or (2) both the computation cost and communication cost to generate a query are linear in the size of the queried area. \\
In this paper, to improve query efficiency, an oblivious location-based service query (OLBSQ) scheme is proposed. Our scheme captures the following features: (1) a semi-trusted TTP is not required; (2) a user can query services from a service provider without revealing her exact location; (3) the service provider can only know the size of a query made by a user; and (4) both the computation cost and the communication cost to generate a query is constant, instead of linear in the size of the queried area. We formalise the definition and security model of OLBSQ schemes. The security of our scheme is reduced to well-known complexity assumptions.
% and the implementation is evaluated to show its efficiency. 
The novelty is to reduce the computation cost and communication cost of making a query and enable the service provider to obliviously and incrementally generate decrypt keys for queried services. This contributes to the growing work of formalising privacy-preserving LBS schemes and improving query efficiency. 

 \keywords{Location-base Services, Location Privacy, Oblivious Transfer, Security}
\end{abstract}

%\listofchanges
\section{Introduction}\label{sec:intro}

The  advent of mobile devices and mobile networks triggered a new services named  location-based services (LBS). LBS systems enable service providers (SPs) to provide users with accurate services based on their geographical locations. Nowadays, increasing number of users use LBS systems to query nearby Points of Interest (PoI) including shopping centers, restaurants, banks, hospitals, traffic information, navigation, {\em etc.}  However, to query a service, a user must reveal her location to the service provider (SP). Hence, untrusted SPs can profile a user's movement by tracing her location, and conclude her personal information, such as working place, health condition,  commercial partners, {\em etc}. This raises a serious privacy issue.
%Since service information is the asset of SPs, it is desirable that the services are protected and can only disclosed by the SPs. Furthermore, it is reasonable that an SP charges a user according here searching size. 
To protect users' location privacy, privacy-preserving LBS schemes were proposed where either a semi-trusted third party  (TTP) is required or the computation cost of a query is linear in the  size of the queried area. However, in practice, it is difficult to find a party who can work as a semi-trusted TTP in LBS schemes, and mobile devices have constrained computation power and limited storage space.

 Considering the above problems,  an oblivious location-based service query (OLBSQ) scheme is proposed to enhance the security of SPs' services and  protect users' location privacy. Especially, our OLBSQ scheme provides mobile uses with a light query algorithm which has constant computation cost.

\subsection{Related Work}
Due to it can provide accurate services, LBS schemes are becoming increasingly popular. Nevertheless, location privacy has been the primary concern of LBS users. To protect users' location privacy, privacy-preserving LBS schemes were proposed. 

\subsubsection{Privacy-Preserving LBS with A Trusted Third Party}\hfill
\medskip

\noindent In these schemes, to  protect mobile users' location privacy,  a trusted third party called {\em location anonymizer} is required to blur a user's exact location into a cloaked area. Meanwhile, the cloaked area must satisfy the user's privacy requirements. The popular privacy requirement is $k$-anonymity, namely a user's location is indistinguishable from other $k-1$ users' locations.  
Gruteser and  Grunwald  \cite{gg:lbs2003} proposed an anonymous LBS scheme where  the location anonymizer needs to remove any identifiers such as network and address, and perturbs the position data. In \cite{gg:lbs2003}, the location anonymizer knows users' location, and users need to periodically update their location information to the location anonymizer. 

Proposed by Mokbel, Chow and Aref \cite{mca:lbs2006},  $Casper^{*}$ is a privacy-aware query processing method for LBS. In Casper \cite{mca:lbs2006}, the location anonymizer blurs users' exact location into cloaked spatial areas and a privacy-aware query processor is embedded in the database to deal with queries based on the cloaked spatial areas. The privacy-aware query processor supports three types of queries: private queries over public data, public queries over private data and private queries over private data. 

 %Chow and Mokbel \cite{cm:lbs2007} presented a robust spatial cloaking technique for snapshot and continuous location-based queries. This technique distinguishes between query privacy and location privacy, and provides the following properties: (1) when issuing queries,  only on-demand spatial cloaking is required, instead of exhaustively cloaking very single location update; (2) it supports private LBSs to the users with public locations.
 
Xu and Cai \cite{xc:lbs2007}  addressed the location anonymity issue in  continuous LBS schemes. In \cite{xc:lbs2007}, entropy was used to measure the anonymity degree of a cloaking area, which consider both the number of the users and their anonymity probability distribution in the cloaking area. When issuing a query, a mobile user sends his query and desired anonymity level to the location anonymizer, and then the location anonymizer generates a session identity for the user and contact the service provider to establish a service session. After  a service session is established, the location anonymizer needs to periodically identify a cloaking area for the user according to her latest location, and report the cloaking area to the service provider. Furthermore, a  polynomial time algorithm was proposed to find a cloaking area satisfying the anonymity requirement.

Kalnis {\em et al.} \cite{kgmp:bls2007} proposed a framework to prevent location-based identity inference of users. In \cite{kgmp:bls2007}, when receiving a query, the location anonymizer first removes the user's identity, and uses an anonymizing spatial region to hide the user's location.  This framework optimizes the processing of both location anonymity and spatial queries. 

Gedik and Liu \cite{gl:bls2008} introduced a scalable architecture to protect users' location privacy. The architecture consists of a model of personalised location anonymity and a set of location perturbation algorithms. In \cite{gl:bls2008}, upon receiving a query from a user, the location anonymizer remove the identity of the user and perturbs her location by replacing a 2-dimensional point with a spatial cloaking ranger. Especially, users are allowed to specify the minimum level of anonymity and the maximum temporal and spatial tolerances. 

Chen {\em et al.} \cite{chycdx:bls2018} proposed a new scheme to protect users' location privacy. In \cite{chycdx:bls2018}, redundant point-of-interest (POI) records were applied to protect location privacy. When receiving a query from a user, the location anonymizer first generates a $k$-anonymity  rectangle area for the user, and then sends the anonymous query to the service provider. Notably, a blind filter scheme was proposed to enable the location anonymizer to filter out the redundant POI records on behalf of users. 

 To leveraging spatial diversity in LBS, He {\em et al.} \cite{hjd:lbs2018} first proposed ambient environment-dependent location privacy metrics and a stochastic model, and then developed an optimal stopping-based LBS scheme which enable users to leverage the spatial diversity. 

Grissa  {\em et al.} \cite{gyh:loc2017} proposed two schemes to protect the location privacy of second users where a TTP named fusion centre (FC) is required to orchestrates the sensing operation. The first scheme is based on an order-preserving encryption (OPE) and has lower communication head, while the second scheme is based on a secure comparison protocol and has lesser architectural cost. 

Schlegel {\em et al.} \cite{schw:grid2015} proposed a user-defined privacy LBS scheme called dynamic grid system (DGS) which support both privacy-preserving continuous $k$-nearest-neighbor ($k$-NN) and range queries. In \cite{schw:grid2015}, each user generates a grid structure according to her privacy requirement and embeds it into an encrypted query area. 
When making a query, a user encrypts a secret key $K$ and the grid structure by using an identity-based encryption scheme, and sends the ciphertexts to the service provider. Subsequently, the user generates an encrypted identifier for each cell  in the intended area using a deterministic encryption technique, and sends it to the TTP. To process a query, the service provider decrypts the ciphertext and obtains the secret key and the grid architecture.  The service provider uses the secret key and the deterministic encryption technique to generate encrypted identifiers for all cells where  POIs exist. Later, the service provider sends all the encrypted identifiers to the TTP. The TTP match the encrypted identifiers from the user and those from the service provider, and send the same encrypted identifiers to the user. Finally, the user can decrypt the encrypted identifiers and know the locations of the POIs.
Notably, the communication cost to generate a query is linear with the number of POI in the vicinity  and independent of the number of cells in the grid.

In above schemes, a TTP is required to  protect users' location privacy. However, in practice, it is difficult to find an entity which can play the role of the TTP.

\subsubsection{Privacy-Preserving LBS without A Trusted Third Party}\hfill
\medskip

Chow, Mokbel and Liu \cite{cml:lbs2006} proposed a peer-to-peer (P2P) spatial cloaking scheme which enables users to obtain services without the need of a TTP. Prior to make a query, a user needs to forms a group from her peers via single-hop communication/multiple-hop routing. The spatial cloaked area should cover all peers in the group. Furthermore, the user randomly selects one peer in the group as her agent and sends both her query and cloaked spatial region to the agent. The agent forwards the query to the service provider and receives a list of answers including actual answers and false answers. Then, the agent sends the answers to the user. Finally, the user filter out false answers and obtain the actual answers. The P2P spatial cloaking scheme supports two models: on-demand model and proactive model. Comparatively, the on-demand model is efficient, but requires longer response time. 

Ghinita,  Kalnis and Skiadopoulos \cite{gks:lbs2007} proposed a decentralised LBS scheme named $\mbox{PRIV}\acute{E}$ where each user can organises herself into a hierarchical overlay network and make service queries anonymously. Each user can decide the degree $k$ of anonymity  and the $\mbox{PRIV}\acute{E}$ algorithm can identify an appropriate set  consisting of $k$ users in a distributed manner. To protect users' anonymity, the HILB-ASR algorithm was proposed to guarante that the probability of identifying a real service requester is always bounded by $\frac{1}{k}$.  This scheme is scalable and fault tolerant. 

Paulet {\em et al.} \cite{pkyb:loc2014} proposed a privacy-preserving and content-protecting LBS scheme. This scheme was derived from the oblivious transfer (OT) scheme \cite{np:ot1999} and private information retrieve (PIR) \cite{gr:pir2005}.  Each user firsts runs the OT protocol with the service provider to obtain the location identity and a secret key, and then executes the PIR protocol with the service provider to obtain the location data by using the secret key. The author formalised the security model and analysed the security of the proposed scheme.

Schlegel {\em et al.} \cite{schw:loc2017} proposed an order-retrievable encryption (ORE) scheme with the following two properties: (1) it can generate a encrypted query location; (2) given two encrypted user locations,  a server can determine which one is closed to the an encrypted query location. Subsequently, based on the proposed ORE scheme, a privacy-preserving location sharing services scheme was presented. In \cite{schw:loc2017}, a user or a group  initiator should create a group. The group initiator generates a shared key for the ORE scheme and a shared key for AES scheme. Every user in the group updates periodically her location information to a database server using the ORE and AES  techniques. When receiving a encrypted query location, the server can search out the exact answer without knowing the location information. Finally, the user can use the shared key for AES to decrypt the cipherext and obtain the location information. In \cite{schw:loc2017}, a group of users need to share keys prior to sharing location information.

Hu {\em et al.} \cite{hwhhlc:bls2018} proposed a LBS with query content privacy scheme based on homomorphic encryption, OT and PIR. In \cite{hwhhlc:bls2018}, a user can obtain accurate services, but does not release any  query content information to the server.   The homomorphic encryption is used to compute the Euclidean distance between the attribute vector submitted by a user and the attribute vectors in the database. The OT protocol was used to find the exact match vectors for the queried attribute vector. Finally, the PIR protocol was applied to obtain the intended POI set.  The security  of the proposed scheme was analysed, instead of formal reduction.

In these schemes \cite{cml:lbs2006,gks:lbs2007,pkyb:loc2014,schw:loc2017,hwhhlc:bls2018}, both the computation and communication cost  to generate a query are linear with the size of the queried area. This is undesirable to the devices which have limited computation power and storage space, such as smart phone, tablet, {\em etc}. 

\subsection{Contributions}
To protect users' location privacy, we propose an OLBSQ scheme which can provide the following important features:  (1) a semi-trusted TTP is not required; (2) a user can  query services from a service provider without revealing her exact location; (3) a service provider can only know the size of a query made by a user; and (4) both the computation cost and the communication cost to generate a query is constant, instead of linear with the size of the queried area. 

Our contributions include: (1) both the definition and security model of the proposed OLBSQ scheme are formalised; (2) a concrete OLBSQ scheme is proposed; (3) the security of the proposed OLBSQ is reduced to well-known complexity assumptions.
%; (4) we implement and evaluate our scheme. 

\subsection{Organization}
The remaining of this paper is organised as follows. Preliminaries used throughout this paper are introduced in Section \ref{sec:preli}. In Section \ref{sec:const}, we formally present our construction. 
%The implementation and evaluation of our scheme are described in Section \ref{sec:implement}. 
In Section \ref{sec:analy}, we prove the security of our scheme. Finally, Section \ref{sec:conc} concludes this paper.

\section{Preliminaries}\label{sec:preli}
In this section, all preliminaries used throughout this paper are introduced. 

\subsection{Formal Definition}

\begin{figure}
\centering
\includegraphics[width=12cm,height=6cm]{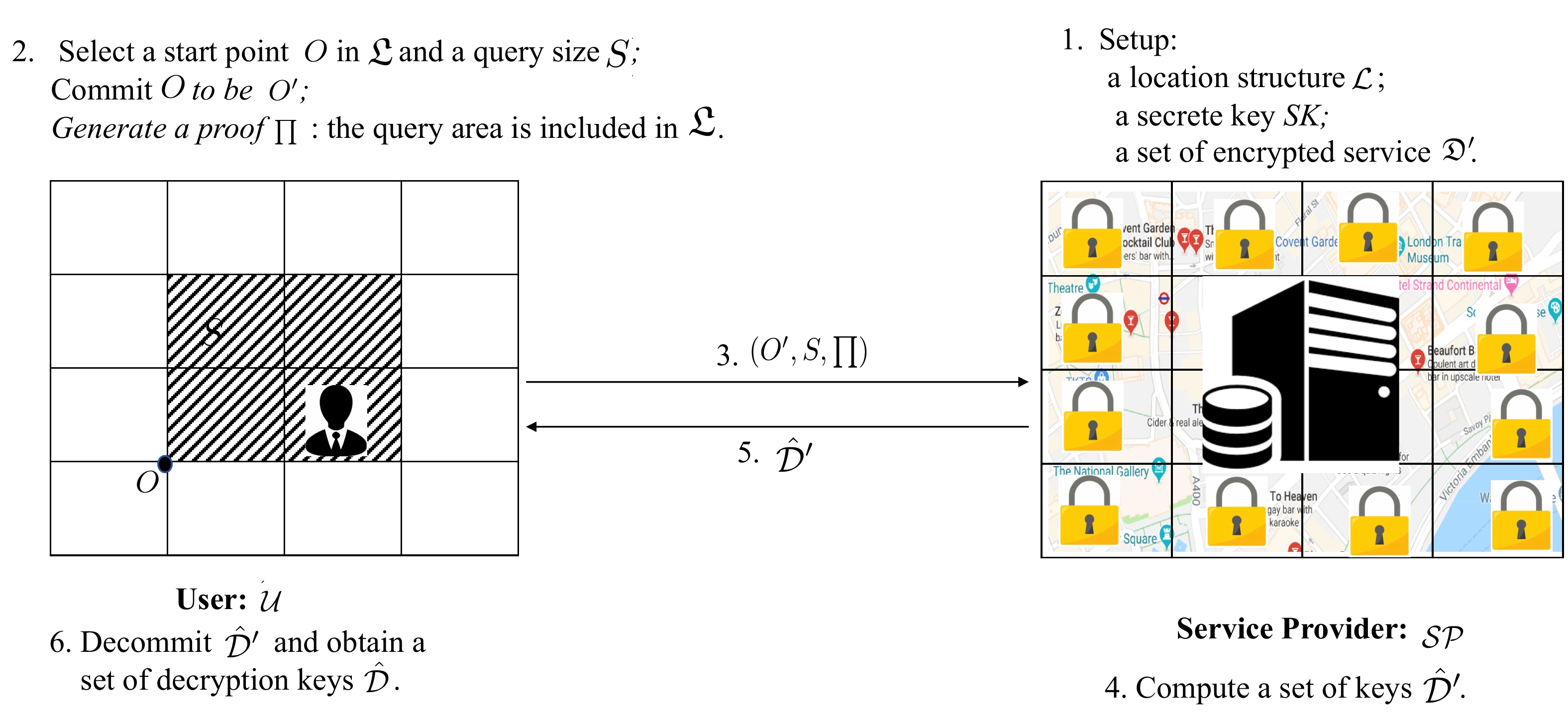}
\caption{The Framework of Our OLBSQ Scheme}\label{framework}
\end{figure}

Let $\mathfrak{L}$ be a location structure (e.g. grid) and $O$ be a point in $\mathfrak{L}$. By $(O;S)$, we denote that the area with start point $O$ and size $S$ in $\mathfrak{L}$. For example, if $\mathfrak{L}$ is a grid system, $(O=(i,j);S=l\times k)$ is the area consisting of the left-bottom point $O$ and $l\times k$ continuous cells.  Let $\mathfrak{D}$ be the services included in $\mathfrak{L}$ and  $\mathfrak{D}'$ be the encrypted services. $\hat{\mathfrak{D}}\in (O;S)$ stands for the services included in the area $(O;S)$. Fig. \ref{framework} describes the framework of our OLBSQ scheme. The service provider $\mathcal{SP}$ first generates a secret key $SK$ and some public parameters $PP$, selects a location structure $\mathcal{L}$. Suppose that $SP$ has a set of service $\mathcal{D}$, he encrypts each service in $\mathcal{D}$ by using $SK$ and its location information, and obtains an encrypted set of services $\mathcal{D}'$. To query services included in an area, a user $\mathcal{U}$ select a start point $O$ and the query size $S$, and then commit $O$ to be a point $O'$. Furthermore, $\mathcal{U}$ generates a proof $\prod$ that the queried area starting from $O$ with size $S$ is included in $\mathcal{L}$. $\mathcal{U}$ sends $(O',S,\prod)$ to $\mathcal{SP}$. If $\prod$ is correct, $\mathcal{SP}$ uses $SK$ to obliviously and incrementally compute a set of keys $\hat{\mathcal{D}}'$ according to $O'$ and $S$, and sends $\hat{\mathcal{D}}'$ to $\mathcal{U}$. Finally, $\mathcal{U}$ decommit $\hat{\mathcal{D}}'$, and obtain a set of decryption key $\hat{\mathcal{D}}$ which enable her to access the intended services. 

\medskip

An OLBSQ scheme consists of the following two algorithms:
\begin{itemize}
\item{\sf Setup}$(1^{\ell},\mathfrak{L},\mathfrak{D})\rightarrow(SK,PP,\mathfrak{D}').$ Taking as input a security parameter $1^{\ell}$, a location structure $\mathfrak{L}$ and a set of services $\mathfrak{D}$, this algorithm outputs a secret key $SK$ for $SP$, some public parameters $PP$ and the encrypted services $\mathfrak{D}'$. 
\item{\sf Service-Transfer}$(\mathcal{U}(O,S,PP)\leftrightarrow \mathcal{SP}(PP,SK))\rightarrow(\hat{\mathfrak{D}},(O',S,\prod))$. This is an interactive algorithm executed between a user $\mathcal{U}$ and the service provider $\mathcal{SP}$. $\mathcal{U}$ takes as input the public parameters $PP$, the start point $O$ and the query size $S$, and outputs the intended services $\hat{\mathfrak{D}}\subset \mathfrak{D}$. $\mathcal{SP}$ takes as input the public parameters $PP$ and the secret key $SK$, and outputs the committed start point $O'$, query size $S$ and a proof $\prod$ that the queried area with start point $O$ and size $S$ is in $\mathcal{L}$. 
\end{itemize}

\begin{definition}
We say that an oblivious location-based service query scheme is correct if and only if
\begin{equation*}
\Pr\left[ \begin{array}{c|l}
& {\sf Setup}(1^{\ell},\mathcal{L},\mathfrak{D})\rightarrow(SK,PP,\mathfrak{D'});\\
\hat{\mathfrak{D}}\subset \mathfrak{D}~\wedge~ \hat{\mathfrak{D}}\in (O,S) &{\sf Service-Transfer}(\mathcal{U}(PP,O,S)\leftrightarrow \\
& \mathcal{SP}(PP,SK))\rightarrow(\hat{\mathfrak{D}},(O',S,\prod));\\
&  \prod ~\mbox{is correct.}
\end{array}
\right]=1.
\end{equation*}
\end{definition}
\subsection{Security Model}

The security model of OLBSQ schemes is formalised by using the simulation-based model \cite{cdn:otac2009,cns:ot2007,j:sim2018,pw:sim2001} where the real world experiment and ideal world experiment are defined. In the real world experiment, there are some parties who run the protocol: an adversary $\mathcal{A}$ who controls some of the parties and an environment $\mathcal{E}$ who provides inputs to all honest parties and interact arbitrarily with $\mathcal{A}$. The dishonest parties are controlled by  $\mathcal{A}$. In the ideal world experiment, there are same parties as in the real world experiment. Notably, these parties do not run the protocol. They submit their inputs to a ideal functionality $\mathcal{F}$ and receive outputs from $\mathcal{F}$. $\mathcal{F}$ specifies the behaviour that the desired protocol should implement in the real world. $\mathcal{E}$ provides inputs to and receives outputs from honest parties. Let $\mathcal{S}$ be a simulator who controls the dishonest parties in the ideal world experiment as $\mathcal{A}$ does in the real world experiment. Furthermore,  $\mathcal{E}$ interacts with $\mathcal{S}$ arbitrarily. 

%An oblivious location-based service query system securely implement the functionality $TP$ if for any real world adversary $\mathcal{A}$ and environment $\mathcal{E}$, there exists an simulator $\mathcal{S}$ in the ideal world experiment such that $\mathcal{E}$ cannot distinguish whether it is interacting with $\mathcal{A}$ in the real world experiment or interacting with $\mathcal{S}$ in the ideal world experiment.

\begin{definition}
Let ${\bf Real}_{\mathcal{P},\mathcal{E},\mathcal{A}}$ be the probability with which $\mathcal{E}$ runs the protocol $\mathcal{P}$ with $\mathcal{A}$ and outputs 1 in the real world experiment. Let ${\bf Ideal}_{\mathcal{F},\mathcal{E},\mathcal{S}}$ be the probability with which $\mathcal{E}$ interacts with $\mathcal{S}$ and $\mathcal{F}$, and outputs 1 in the ideal world experiment. We say that the protocol $\mathcal{P}$ securely realizes the functionality $\mathcal{F}$ if 
\begin{equation*}
\left| {\bf Real}_{\mathcal{P},\mathcal{E},\mathcal{A}}-{\bf Ideal}_{\mathcal{F},\mathcal{E},\mathcal{S}}\right|\leq \epsilon(\ell).
\end{equation*}
\end{definition}
\medskip

The ideal functionality of  OLBSQ schemes is formalized in Fig. \ref{fig:fun}.

\begin{figure}
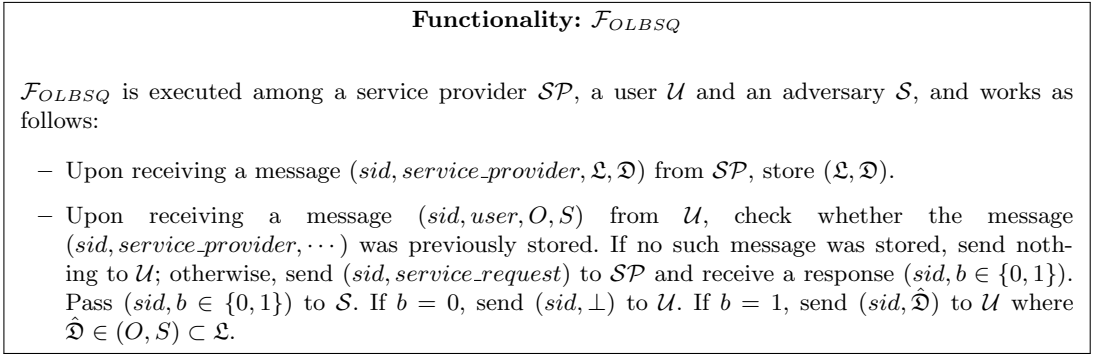

\centering
\fbox{
\begin{minipage}{14cm}
\begin{center} {\bf Functionality: $\mathcal{F}_{OLBSQ}$}\end{center}
\medskip
$\mathcal{F}_{OLBSQ}$ is executed among a service provider $\mathcal{SP}$, a user $\mathcal{U}$ and an adversary $\mathcal{S}$, and works as follows:
\begin{itemize}
\item Upon receiving a message $(sid, service\_provider,\mathfrak{L},\mathfrak{D})$ from $\mathcal{SP}$, store $(\mathfrak{L},\mathfrak{D})$.
\medskip

\item Upon receiving a message $(sid, user, O,S)$ from $\mathcal{U}$, check whether the message $(sid, service\_provider,\cdots)$ was previously stored. If no such message was stored, send nothing to $\mathcal{U}$; otherwise, send $(sid,service\_request)$ to $\mathcal{SP}$ and receive a response $(sid,b\in\{0,1\})$. Pass $(sid,b\in\{0,1\})$ to $\mathcal{S}$. If $b=0$, send $(sid,\perp)$ to $\mathcal{U}$. If $b=1$, send $(sid,\hat{\mathfrak{D}})$ to $\mathcal{U}$ where $\hat{\mathfrak{D}}\in (O,S)\subset \mathfrak{L}$.
\end{itemize}
\end{minipage}
}
\caption{The Functionality of Oblivious Location-Based Service Query Schemes}\label{fig:fun}
\end{figure}

\subsection{Bilinear Map and Complexity Assumptions}
Let $\mathbb{G}_{1}$, $\mathbb{G}_{2}$ and $\mathbb{G}_{\tau}$ be three cyclic groups with prime order $p$.  A map $e:\mathbb{G}_{1}\times\mathbb{G}_{2}\rightarrow\mathbb{G}_{\tau}$ is a bilinear map if it satisfies the following properties:
\begin{enumerate}
\item{\sf Bilinearity.} For all $g\in\mathbb{G}_{1}$, $h\in\mathbb{G}_{2}$ and $x,y\in\mathbb{Z}_{p}$, $e(g^{x},h^{y})=e(g^{y},h^{x})=e(g,h)^{xy}$;
\medskip

\item{\sf Non-degeneracy.} $e(g_{1},g_{2})\neq 1_{\tau}$, where $1_{\tau}$ is the identity of $\mathbb{G}_{\tau}$;
\medskip

\item{Efficiency.} For all  $g\in\mathbb{G}_{1}$ and $h\in\mathbb{G}_{2}$, there is an efficient algorithm to compute $e(g,h)$.
\end{enumerate}
If $\mathbb{G}_{1}=\mathbb{G}_{2}$, $e$ is called a symmetric bilinear map.
Let $\mathcal{BG}(1^{\ell})\rightarrow(e,p,\mathbb{G},\mathbb{G}_{\tau})$ be a generator of symmetric bilinear group which takes as input a security parameter $1^{\ell}$ and outputs a bilinear group $(e,p,\mathbb{G},\mathbb{G}_{\tau})$ with prime order $p$ and $e:\mathbb{G}\times\mathbb{G}\rightarrow\mathbb{G}_{\tau}$.

\begin{definition}{\sf ($q$-Strong Diffie-Hellman ($q$-SDH) Assumption \cite{bb:ss2007}).} Let $\mathcal{BG}(1^{\ell})\rightarrow(e,p,\mathbb{G},\mathbb{G}_{\tau})$ and $\zeta\stackrel{R}{\leftarrow}\mathbb{Z}_{p}$. Suppose that $g$  be a generator of $\mathbb{G}$.  Given $(g,g^{\zeta},g^{\zeta^{2}},\cdots,g^{\zeta^{q}})$, we say that the $q$-SDH assumption holds on the bilinear group $(e,p,\mathbb{G},\mathbb{G}_{\tau})$ if all probable polynomial-time adversarties $\mathcal{A}$ can output $(c,g^{\frac{1}{\zeta+c}})$ with a negligible advantage, namely
\begin{equation*}
Adv_{\mathcal{A}}^{\mbox{q-SDH}}=\left|\Pr[\mathcal{A}(g,g^{\zeta},g^{\zeta^{2}},\cdots,g^{\zeta^{q}})\rightarrow (c,g^{\frac{1}{\zeta+c}})\right|\leq \epsilon(\ell)
\end{equation*}
where $c\stackrel{R}{\leftarrow}\mathbb{Z}_{p}$ and $c\neq -\zeta$.
\end{definition}

\begin{definition}{\sf ( $q$-Power Decisional Diffie-Hellman  ($q$-PDDH) Assumption \cite{cns:ot2007}).} Let $\mathcal{BG}(1^{\ell})\rightarrow(e,p,$ $\mathbb{G},\mathbb{G}_{\tau})$, $g$ be a generator of $\mathbb{G}$ and $\zeta\stackrel{R}{\leftarrow}\mathbb{Z}_{p}$. Given $(g,g^{\zeta},g^{\zeta^{2}},\cdots,g^{\zeta^{q}},H)$, we say that  $q$-PDDH assumption holds on $(e,p,\mathbb{G},\mathbb{G}_{\tau})$ if all probable polynomial-time adversary $\mathcal{A}$ can distinguish $T=(H^{\zeta},H^{\zeta^{2}},\cdots,H^{\zeta^{q}})$ from $T=(\tilde{H}_{1},\tilde{H}_{2},\cdots,\tilde{H}_{q})$ with a negligible advantage, namely 
\begin{equation*}
\begin{split}
Adv_{\mathcal{A}}^{\mbox{q-PDDH}}=& \Big|\Pr[\mathcal{A}(g,g^{\zeta},g^{\zeta^{2}},\cdots,g^{\zeta^{q}},H,H^{\zeta},H^{\zeta^{2}},\cdots,H^{\zeta^{q}})=1]-\\
& \Pr[\mathcal{A}(g,g^{\zeta},g^{\zeta^{2}},\cdots,g^{\zeta^{q}},H,\tilde{H}_{1},\tilde{H}_{2},\cdots,\tilde{H}_{q})=1]\Big|\leq \epsilon(\ell)
\end{split}
\end{equation*}
where $H,\tilde{H}_{1},\tilde{H}_{2},\cdots,\tilde{H}_{q}\stackrel{R}{\leftarrow}\mathbb{G}_{\tau}$.
\end{definition}

\section{Construction}\label{sec:const}
In this section, we describe the formal construction of our OLBQS scheme. 

\subsection{High-Level Overview}

To construct our scheme, we use the grid structure which is described in Fig. \ref{grid}. The location of each cell is determined by the coordinate of the point at its upper-right corner. Suppose that all services included in a cell are encrypted under a same key. Firstly, the service provider divides the whole area into $m\times n$ cells, and then  generates a secret key and some public parameters. The service provider encrypts each service in a cell by using his secret key and the coordinate of the cell. Finally, the service provider publishes the public parameters and the encrypted services. 

When making a service query, a user selects a start point  $O=(i,j)$ and the query size $S=k\times l$ where $k$ and $l$ are the numbers of cells in each row and  each column, respectively. The user commits $O=(i,j)$ to be a point $O'$, generates a proof $\prod$ that the queried area $(O';S)$ is included in $\mathcal{L}$, and sends $(O', S,\prod_{U})$ to the service provider. After receiving $(O', S,\prod_{U})$, the service provider first checks the correctness of $\prod_{U}$, and then uses his secret key to obliviously an incrementally compute a set of keys according $O'$ and $S$. Furthermore, the service provider generates a proof $\prod_{SP}$ that these  keys are computed correctly, and sends the keys and $\prod_{SP}$ to the user. Finally, the user verifies the proof $\prod_{SP}$, de-commits the  keys and obtains the corresponding decryption keys. Finally, the user decrypts the ciphertexts and obtains the intended services. Notably, to retrieve a service, the user only needs to execute 3 exponent operations on $\mathbb{G}_{\tau}$.
% ($\approx 0.636$ ms on laptop).

\begin{figure}
\centering
\includegraphics[width=10cm,height=6cm]{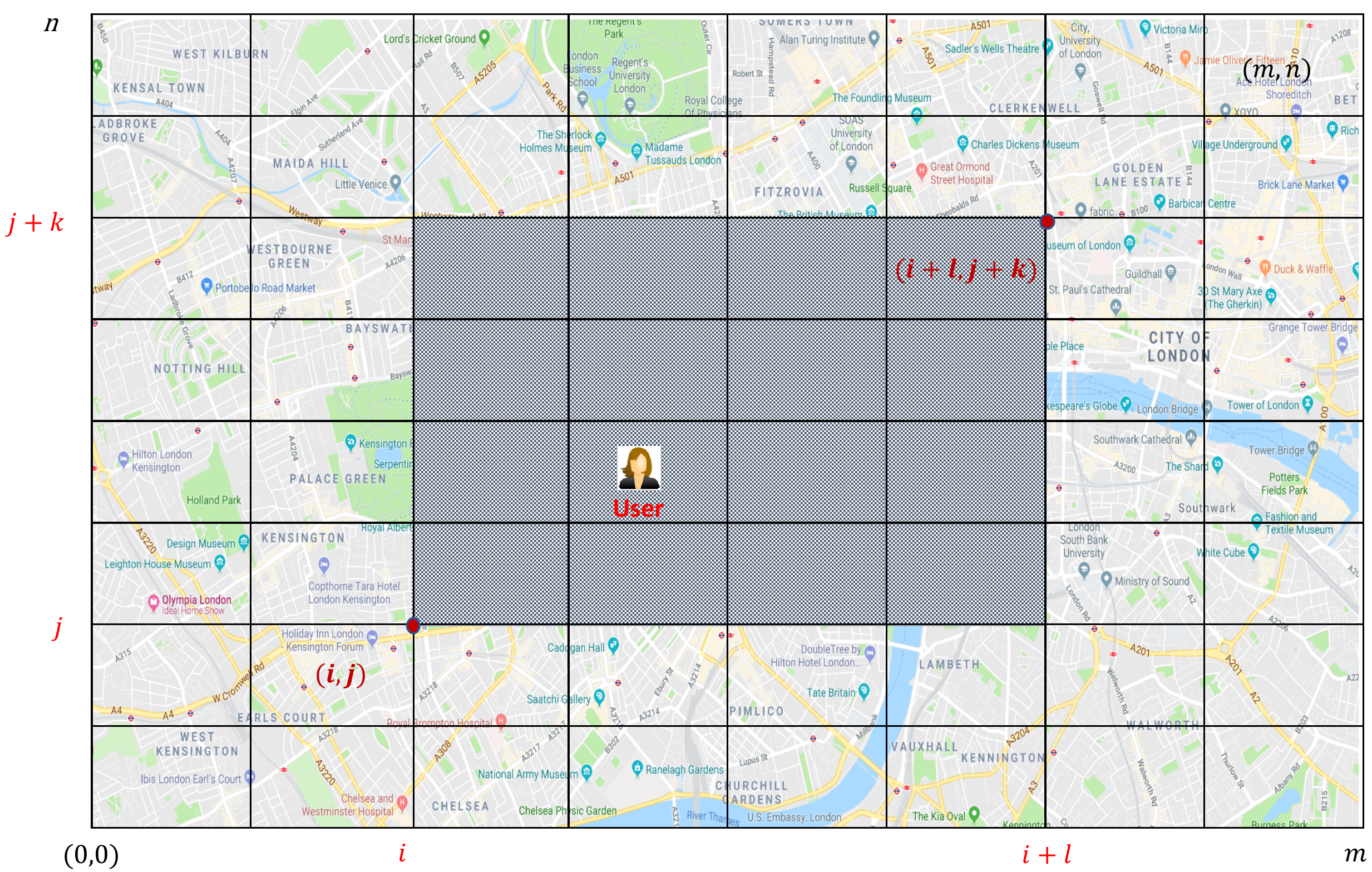}\caption{Grid Location Model of Our Scheme}\label{grid}
\end{figure}

\subsection{Our Construction}
Our OLBSQ scheme is presented in Fig. \ref{fig:setup} and Fig. \ref{fig:service_tr}.
\medskip

\noindent{\bf Setup.} The service provider $\mathcal{SP}$ first divides the whole area $\mathcal{L}$ into $m\times n$ cells. $\mathcal{SP}$ generates a bilinear group by running $\mathcal{BG}(1^{\ell})\rightarrow(e,p,\mathbb{G},\mathbb{G}_{\tau})$, and then selects its secret key $SK=(\alpha_{1},\alpha_{2},\beta_{1},\beta_{2},x,y,\mathfrak{h})$ where $\alpha_{1},\alpha_{2},\beta_{1},\beta_{2},x,y\stackrel{R}{\leftarrow}\mathbb{Z}_{p}$ and $\mathfrak{h}\stackrel{R}{\leftarrow}\mathbb{G}_{2}$. To encrypt the service  $M_{i,j}$ in a cell $C(i,j)$ using its coordinate $(i,j)$, $\mathcal{SP}$ computes $A_{i,j}=g_{1}^{i}h_{1}^{j}g_{2}^{x^{i}}h_{2}^{y_{j}}$ and $B_{i,j}=e(A_{i,j},\mathfrak{h})\cdot M_{i,j}$ for $i=1,2,\cdots,m$ and $j=1,2,\cdots,n$. To enable each user $\mathcal{U}$ to prove that a committed point is in the whole area and $\mathcal{SP}$ to obliviously and incrementally generate decryption keys according $\mathcal{U}$'s query, $\mathcal{SP}$ computes $H=e(\mathfrak{g},\mathfrak{h})$,
 $W_{1}=g_{1}^{\alpha_{1}}$, $W_{2}=g_{2}^{\alpha_{2}}$, $W_{1}'=h_{1}^{\beta_{1}}$, $W_{2}'=h_{2}^{\beta_{2}}$,  $ \Gamma_{1}^{i}=g_{1}^{\frac{1}{\alpha_{1}+i}}$, $\Gamma_{2}^{j}=h_{1}^{\frac{1}{\beta_{1}+j}}$, $(C_{i,1}=g_{2}^{x^{i}},C_{i,2}=g_{2}^{\frac{1}{\alpha_{2}+x^{i}}},C_{i,3}=e(\mathfrak{g},\mathfrak{h})^{x^{i}})$, $(D_{j,1}=h_{2}^{y^{j}},D_{j,2}=h_{2}^{\frac{1}{\beta_{2}+y^{j}}},D_{j,3}=e(\mathfrak{g},\mathfrak{h})^{y^{j}})$ for $i=0,1,2,\cdots,m$ and $j=1,2,\cdots,n$. Actually, $(W_{1},W_{2},W'_{1},W'_{2},\Gamma_{1}^{i},\Gamma_{2}^{j},C_{i,2},D_{j,2})$ are used by $\mathcal{U}$ to prove that a committed start point $O(i,j)$ is within $\mathcal{L}$ for $i=1,2,\cdots,m$ and $j=1,2,\cdots,n$; while other parameters are used by $\mathcal{SP}$ to computes decryption keys. Finally, the public parameters are $PP=\Big(e,p,\mathbb{G},\mathbb{G}_{\tau},\mathfrak{g},g_{1},g_{2},g_{3},g_{4},H,W_{1},$ $W_{2},W'_{1},W'_{2},\Gamma_{1}^{1},\cdots,\Gamma_{1}^{m},\Gamma_{2}^{1},\cdots,\Gamma_{2}^{n},((A_{1,1},B_{1,1}),\cdots,\\(A_{m,n},B_{m,n}),(C_{1,1},$ $C_{1,2},C_{1,3}),\cdots,(C_{m,1},$ $C_{m,2},C_{m,3}),(D_{1,1},D_{2,1},D_{1,3}),\cdots,(D_{n,1},D_{n,2},D_{n,3})\Big)$ and $\mathcal{D}'=\left\{((A_{i,j}, B_{i,j})_{i=1}^{m})_{j=1}^{n}\right\}$.

\begin{figure}
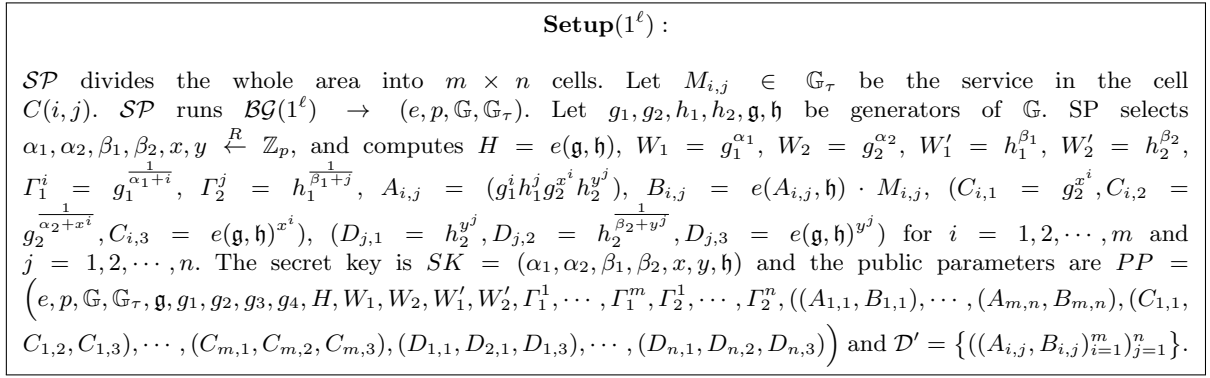

\centering
\fbox{
\begin{minipage}{15.5cm}
\begin{center}{\bf Setup}$(1^{\ell}):$\end{center}
$\mathcal{SP}$ divides the whole area into $m\times n$ cells. Let   $M_{i,j}\in\mathbb{G}_{\tau}$ be the service  in the  cell $C(i,j)$. $\mathcal{SP}$  runs $\mathcal{BG}(1^{\ell})\rightarrow(e,p,\mathbb{G},\mathbb{G}_{\tau})$. Let $g_{1},g_{2},h_{1},h_{2},\mathfrak{g},\mathfrak{h}$ be  generators of $\mathbb{G}$. SP selects $\alpha_{1},\alpha_{2},\beta_{1},\beta_{2},x,y\stackrel{R}{\leftarrow}\mathbb{Z}_{p}$, and computes $H=e(\mathfrak{g},\mathfrak{h})$,
 $W_{1}=g_{1}^{\alpha_{1}}$, $W_{2}=g_{2}^{\alpha_{2}}$, $W_{1}'=h_{1}^{\beta_{1}}$, $W_{2}'=h_{2}^{\beta_{2}}$,  $ \Gamma_{1}^{i}=g_{1}^{\frac{1}{\alpha_{1}+i}}$, $\Gamma_{2}^{j}=h_{1}^{\frac{1}{\beta_{1}+j}}$, $A_{i,j}=(g_{1}^{i}h_{1}^{j}g_{2}^{x^{i}}h_{2}^{y^{j}})$, $B_{i,j}=e(A_{i,j},\mathfrak{h})\cdot M_{i,j}$, $(C_{i,1}=g_{2}^{x^{i}},C_{i,2}=g_{2}^{\frac{1}{\alpha_{2}+x^{i}}},C_{i,3}=e(\mathfrak{g},\mathfrak{h})^{x^{i}})$, $(D_{j,1}=h_{2}^{y^{j}},D_{j,2}=h_{2}^{\frac{1}{\beta_{2}+y^{j}}},D_{j,3}=e(\mathfrak{g},\mathfrak{h})^{y^{j}})$ for $i=1,2,\cdots,m$ and $j=1,2,\cdots,n$.  The secret key is $SK=(\alpha_{1},\alpha_{2},\beta_{1},\beta_{2},x,y,\mathfrak{h})$ and the public parameters  are 
$PP=\Big(e,p,\mathbb{G},\mathbb{G}_{\tau},\mathfrak{g},g_{1},g_{2},g_{3},g_{4},H,W_{1},W_{2},W'_{1},W'_{2},\Gamma_{1}^{1},\cdots,\Gamma_{1}^{m},\Gamma_{2}^{1},\cdots,\Gamma_{2}^{n},((A_{1,1},B_{1,1}),\cdots,(A_{m,n},B_{m,n}),(C_{1,1},$ $C_{1,2},C_{1,3}),\cdots,(C_{m,1},C_{m,2},C_{m,3}),(D_{1,1},D_{2,1},D_{1,3}),\cdots,(D_{n,1},D_{n,2},D_{n,3})\Big)$ and $\mathcal{D}'=\left\{((A_{i,j}, B_{i,j})_{i=1}^{m})_{j=1}^{n}\right\}$.
\end{minipage}
}\caption{Setup Algorithm}\label{fig:setup}
\end{figure}
\medskip

\begin{figure}
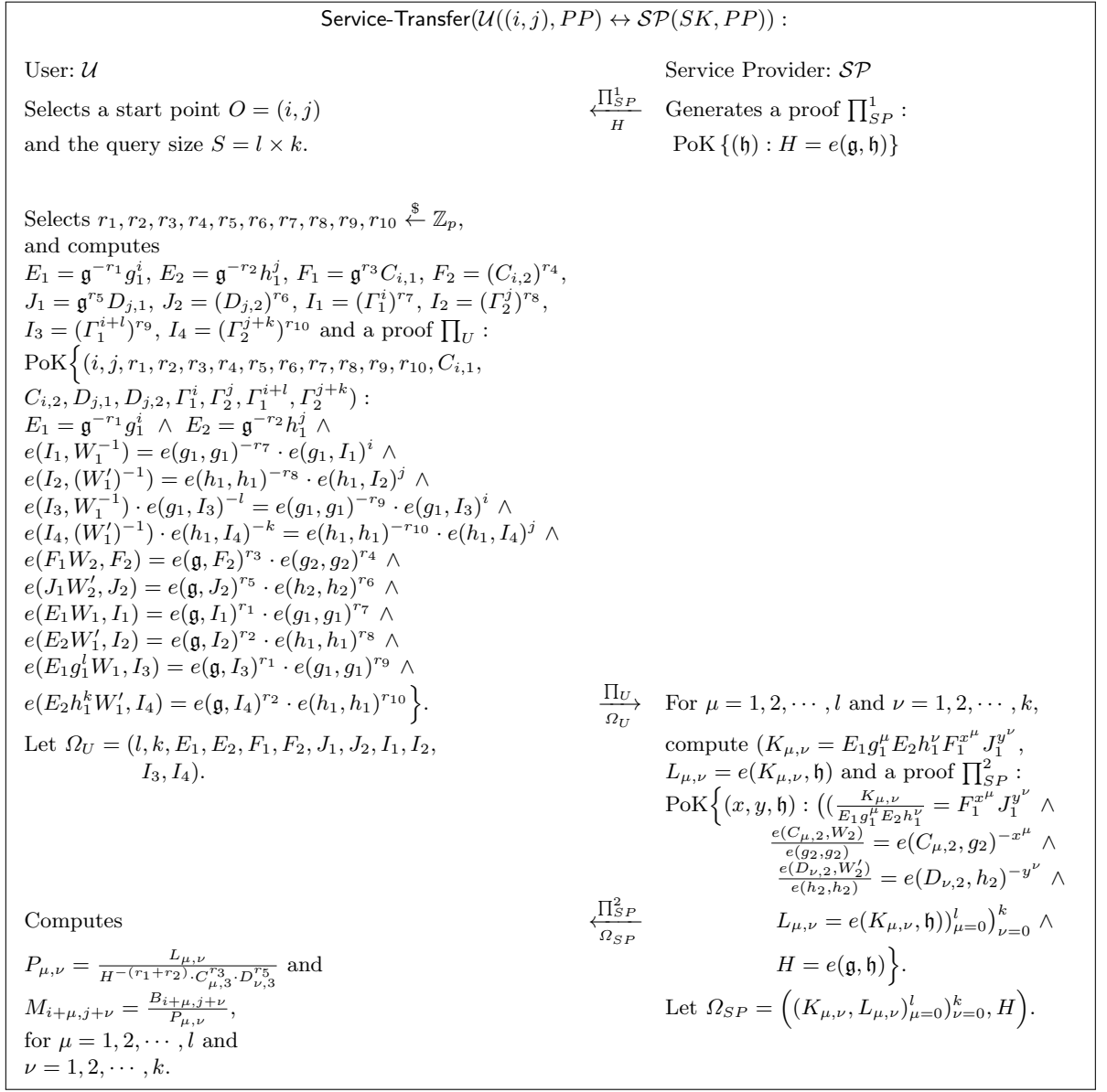

\centering
\fbox{
\begin{minipage}{15.5cm}
\begin{center}{\sf Service-Transfer}$(\mathcal{U}((i,j),PP)\leftrightarrow \mathcal{SP}(SK,PP)):$\end{center}
\begin{tabular}{lcl}
User: $\mathcal{U}$ &  & Service Provider: $\mathcal{SP}$\\
Selects a start point $O=(i,j)$   & ~ $\xleftarrow[H]{\prod_{SP}^{1}}$ ~ & Generates a proof $\prod_{SP}^{1}:$\\
  and the query size $S=l\times k$. & & ~$ \mbox{PoK}\left\{(\mathfrak{h}):H=e(\mathfrak{g},\mathfrak{h})\right\}$\\
   \medskip\\
    
Selects $r_{1},r_{2},r_{3},r_{4},r_{5},r_{6},r_{7},r_{8},r_{9},r_{10}\stackrel{\$}{\leftarrow}\mathbb{Z}_{p}$, \\
 and computes\\
$E_{1}=\mathfrak{g}^{-r_{1}}g_{1}^{i}$, $E_{2}=\mathfrak{g}^{-r_{2}}h_{1}^{j}$,
 $F_{1}=\mathfrak{g}^{r_{3}}C_{i,1}$,  $F_{2}=(C_{i,2})^{r_{4}}$,\\
 $J_{1}=\mathfrak{g}^{r_{5}}D_{j,1}$, $J_{2}=(D_{j,2})^{r_{6}}$,
   $I_{1}=(\Gamma_{1}^{i})^{r_{7}}$, $I_{2}=(\Gamma_{2}^{j})^{r_{8}}$,  &  &\\
   $I_{3}=(\Gamma_{1}^{i+l})^{r_{9}}$, $I_{4}=(\Gamma_{2}^{j+k})^{r_{10}}$ 
and a proof $\prod_{U}:$\\
$ \mbox{PoK}\Big\{(i,j,r_{1},r_{2},r_{3},r_{4},r_{5},r_{6},r_{7},r_{8},r_{9},r_{10},C_{i,1},$\\
$C_{i,2},D_{j,1},D_{j,2},\Gamma_{1}^{i},\Gamma_{2}^{j},\Gamma_{1}^{i+l},\Gamma_{2}^{j+k}):$\\
$E_{1}=\mathfrak{g}^{-r_{1}}g_{1}^{i}~ \wedge~ E_{2}=\mathfrak{g}^{-r_{2}}h_{1}^{j}~ \wedge$\\
%$e(F_{2},W_{2})\cdot e(F_{1},F_{2})=e(g_{2},g_{2})^{r_{4}}\cdot e(\mathfrak{g},F_{2})^{r_{3}}~\wedge$\\
%$e(J_{2},W_{2}')\cdot e(J_{1},J_{2})=e(h_{2},h_{2})^{r_{6}}\cdot e(\mathfrak{g},J_{2})^{r_{5}}~\wedge$\\
$e(I_{1},W_{1}^{-1})=e(g_{1},g_{1})^{-r_{7}}\cdot e(g_{1},I_{1})^{i}~\wedge$\\
$e(I_{2},(W'_{1})^{-1})=e(h_{1},h_{1})^{-r_{8}}\cdot e(h_{1},I_{2})^{j}~\wedge$\\
$e(I_{3},W_{1}^{-1})\cdot e(g_{1},I_{3})^{-l}=e(g_{1},g_{1})^{-r_{9}}\cdot e(g_{1},I_{3})^{i}~\wedge$\\
$e(I_{4},(W'_{1})^{-1})\cdot e(h_{1},I_{4})^{-k}=e(h_{1},h_{1})^{-r_{10}}\cdot e(h_{1},I_{4})^{j}~\wedge$\\
$ e(F_{1}W_{2},F_{2})=e(\mathfrak{g},F_{2})^{r_{3}}\cdot e(g_{2},g_{2})^{r_{4}}~\wedge$\\
 $e(J_{1}W'_{2},J_{2})=e(\mathfrak{g},J_{2})^{r_{5}}\cdot e(h_{2},h_{2})^{r_{6}}~\wedge$\\
$e(E_{1}W_{1},I_{1})=e(\mathfrak{g},I_{1})^{r_{1}}\cdot e(g_{1},g_{1})^{r_{7}}~ \wedge$\\
 $ e(E_{2}W'_{1},I_{2})=$ $ e(\mathfrak{g},I_{2})^{r_{2}}\cdot e(h_{1},h_{1})^{r_{8}} ~\wedge $\\
  $e(E_{1}g_{1}^{l}W_{1},I_{3})=e(\mathfrak{g},I_{3})^{r_{1}}\cdot e(g_{1},g_{1})^{r_{9}} ~\wedge $\\
 $e(E_{2}h_{1}^{k}W'_{1},I_{4})= e(\mathfrak{g},I_{4})^{r_{2}}\cdot e(h_{1},h_{1})^{r_{10}} \Big\}.$ &
~~$\xrightarrow[\Omega_{U}]{\prod_{U}}$~~&For $\mu=1,2,\cdots,l$ and $\nu=1,2,\cdots,k$,   \\
Let $\Omega_{U}=(l,k,E_{1},E_{2},F_{1},F_{2},J_{1},J_{2},I_{1},I_{2},$
&& compute $(K_{\mu,\nu}=E_{1}g_{1}^{\mu}E_{2}h_{1}^{\nu}F_{1}^{x^{\mu}}J_{1}^{y^{\nu}},$  \\
\hspace{1.7cm}$I_{3},I_{4})$. & & $L_{\mu,\nu}=e(K_{\mu,\nu},\mathfrak{h})$  and a proof $\prod_{SP}^{2}:$\\
& & $ \mbox{PoK}\Big\{(x,y,\mathfrak{h}):\big((\frac{K_{\mu,\nu}}{E_{1}g_{1}^{\mu}E_{2}h_{1}^{\nu}}=F_{1}^{x^{\mu}}J_{1}^{y^{\nu}}~\wedge $\\
& & \hspace{1.5cm}$ \frac{e(C_{\mu,2},W_{2})}{e(g_{2},g_{2})}=e(C_{\mu,2},g_{2})^{-x^{\mu}}~\wedge$\\
& & 
\hspace{1.5cm} $\frac{e(D_{\nu,2},W'_{2})}{e(h_{2},h_{2})}=e(D_{\nu,2},h_{2})^{-y^{\nu}}~\wedge$\\
Computes &~ $\xleftarrow[\Omega_{SP}]{\prod_{SP}^{2}}$ ~ & \hspace{1.5cm} $L_{\mu,\nu}=e(K_{\mu,\nu},\mathfrak{h}))_{\mu=0}^{l}\big)_{\nu=0}^{k}~\wedge$\\
$P_{\mu,\nu}=\frac{L_{\mu,\nu}}{H^{-(r_{1}+r_{2})}\cdot C_{\mu,3}^{r_{3}}\cdot D_{\nu,3}^{r_{5}}}$ and 
& & \hspace{1.5cm} $ H=e(\mathfrak{g},\mathfrak{h})\Big\}$.\\
 $M_{i+\mu,j+\nu}=\frac{B_{i+\mu,j+\nu}}{P_{\mu,\nu}}$, 
 & & Let $\Omega_{SP}=\Big((K_{\mu,\nu},L_{\mu,\nu})_{\mu=0}^{l})_{\nu=0}^{k},H\Big)$.\\
 for $\mu=1,2,\cdots,l$ and\\
 $\nu=1,2,\cdots,k$.
\end{tabular}
\end{minipage}}\caption{Service Transfer Algorithm}\label{fig:service_tr}
\end{figure}
\medskip

\noindent{\bf Service-Transfer.} To make a query, $\mathcal{U}$ first selects a start point $O=(i,j)$ and query size $S=l\times k$. $\mathcal{SP}$ generates a proof $\prod_{SP}^{1}$ that he knows the value $\mathfrak{h}$ which is used to encrypt services. If $\prod_{SP}^{1}$ is correct, $\mathcal{U}$ selects  $r_{1},r_{2},r_{3},r_{4},r_{5},r_{6},r_{7},r_{8},r_{9},r_{10}\stackrel{\$}{\leftarrow}\mathbb{Z}_{p}$ and commits  $(i,j,x^{i},y^{j},i+l,j+k)$ into $(E_{1},E_{2},F_{1},F_{2},J_{1},J_{2},I_{1},I_{2},I_{3},I_{4})$. Let $\Omega_{U}=(l,k,E_{1},E_{2},F_{1},F_{2},J_{1},J_{2},I_{1},I_{2},I_{3},I_{4})$. Furthermore, $\mathcal{U}$ generates a proof $\prod_{\mathcal{U}}$ that the query area $(O;S)$ is within $\mathcal{L}$. $\mathcal{U}$ sends $\Omega_{U}$ and $\prod_{U}$ to $\mathcal{SP}$. 

If $\prod_{U}$ is correct, $\mathcal{SP}$ obliviously and incrementally computes a set of keys $(K_{\mu,\nu},L_{\mu,\nu})$ using his secret key $(x,y)$ and generates a proof $\prod_{SP}^{2}$ that $K_{\mu,\nu}$ and $L_{\mu,\nu}$ are generates correctly,  where $\mu=1,2,\cdots,l$ and $\nu=1,2,\cdots,k$.  Let $\Omega_{SP}=\left((K_{\mu,\nu},L_{\mu,\nu})_{\mu=1}^{l})_{\nu=1}^{k},H\right)$. 
$\mathcal{SP}$ sends $\Omega_{SP}$ and $\prod_{SP}^{2}$ to $\mathcal{U}$.

If $\prod_{SP}^{2}$ is correct, $\mathcal{U}$ uses $(r_{1},r_{2},r_{3},r_{5})$ to de-commit the  key $(K_{\mu,\nu},L_{\mu,\nu})$ and obtain $P_{\mu,\nu}=e(g_{1}^{i+\mu}h_{1}^{j+\nu}g_{2}^{x^{i+\mu}}h_{2}^{y^{j+\nu}},\mathfrak{h})$. Furthermore, $\mathcal{U}$ can obtain the services by computing $M_{i+\mu,j+\nu}=\frac{B_{i+\mu,j+\nu}}{P_{\mu,\nu}}$, where $\mu=1,2,\cdots,m$ and $\nu=1,2,\cdots,n$.

\subsection{Efficiency Analysis}
The computation cost and communication cost of our OLBSQ scheme are presented in Table \ref{tab:comp} and Table \ref{tab:comm}, respectively. By $\mathbb{E}$, $\mathbb{E}_{\tau}$, $\mathbb{P}$, $\mathbb{H}$, we denote the time of executing one exponent on the group $\mathbb{G}$, executing one exponent on the group $\mathbb{G}_{\tau}$, executing a pairing and executing one hash function, respectively. $\mathbb{E}_{\mathbb{G}}$, $\mathbb{E}_{\mathbb{G}_{\tau}}$ and $\mathbb{E}_{\mathbb{Z}_{p}}$ stand for the size of one element in the group $\mathbb{G}$, $\mathbb{G}_{\tau}$ and $\mathbb{Z}_{p}$, respectively. 
\begin{table}
\caption{Computation Cost of Our OLBSQ Scheme}
\centering
\begin{tabular}{|c|c|c|c|c|}
\hline
\multirow{3}{*}{Algorithm}& \multirow{3}{*}{Setup} & \multicolumn{3}{|c|}{ Service Transfer}\\
\cline{3-5}
& & \multicolumn{2}{|c|}{$\mathcal{U}$} & \multirow{2}{*}{$\mathcal{SP}$}\\
\cline{3-4}
& & Query& Retrieve &\\
\hline
\multirow{2}{*}{Computation Cost}  & $(4+3m+3n+4mn)\mathbb{E}$ & $16\mathbb{E}+17\mathbb{E}_{\tau}$ & $3klE_{\mathbb{G}}+2(l+k+lk)\mathbb{E}_{\tau} $& $(11+3kl)\mathbb{E}+(33+2kl)\mathbb{E}_{\tau}$ \\
& $+(m+n)\mathbb{E}_{\tau}+(1+mn)\mathbb{P}$ & $+15\mathbb{P}+13\mathbb{H}$ & $2(l+k+lk)\mathbb{P}+2kl\mathbb{H}$& $+(27+4kl)\mathbb{P}+(13+2kl)\mathbb{H}$\\
\hline
\end{tabular}\label{tab:comp}
\end{table}

\begin{table}
\caption{Communication Cost of Our OLBSQ Scheme}
\centering
\begin{tabular}{|c|c|c|c|}
\hline
\multirow{2}{*}{Algorithm}& \multirow{2}{*}{Setup} & \multicolumn{2}{|c|}{Service Transfer}\\
\cline{3-4}
& & $\mathcal{U}\rightarrow\mathcal{SP}$ & $\mathcal{U}\leftarrow\mathcal{SP}$ \\
\hline
\multirow{2}{*}{Communication Cost}  & $(10+3m+3n+mn)E_{\mathbb{G}}+$ & \multirow{2}{*}{$12E_{\mathbb{G}}+16E_{\mathbb{G}_{\tau}}+36E_{\mathbb{Z}_{p}}$} & $(1+3kl)E_{\mathbb{G}}+(2+2kl+l+k)E_{\mathbb{G}_{\tau}}$\\
& $(1+m+n+mn)E_{\mathbb{G}_{\tau}}$ & & $+(1+4kl)E_{\mathbb{Z}_{p}}$\\
\hline
\end{tabular}\label{tab:comm}
\end{table}

\section{Security Analysis}\label{sec:analy}

In this section, the security of our OLBSQ scheme described in Fig. \ref{fig:setup} and Fig. \ref{fig:service_tr} is proven.

\begin{theorem}
 Our oblivious location-based service query scheme in Fig. \ref{fig:setup} and Fig. \ref{fig:service_tr} securely realize the functionality $\mathcal{F}_{OLBSQ}$ in Fig. \ref{fig:fun} under the $q$-SDH and $q$-PDDH assumptions. \label{theo:1}
\end{theorem}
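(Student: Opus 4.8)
The plan is to prove the statement in the simulation-based framework of the security model, by exhibiting, for every real-world adversary $\mathcal{A}$, an ideal-world simulator $\mathcal{S}$ such that no environment $\mathcal{E}$ can distinguish ${\bf Real}_{\mathcal{P},\mathcal{E},\mathcal{A}}$ from ${\bf Ideal}_{\mathcal{F}_{OLBSQ},\mathcal{E},\mathcal{S}}$. Since $\mathcal{SP}$ and $\mathcal{U}$ are the only two parties, it suffices to treat the two non-trivial corruption patterns separately: a corrupted user against an honest service provider, and a corrupted service provider against an honest user. The case in which both parties are honest reduces to correctness together with the hiding and zero-knowledge properties, and the case in which both are corrupted carries no simulation requirement.

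First I would handle the \textbf{corrupted user}. Here $\mathcal{S}$ plays the honest $\mathcal{SP}$ towards $\mathcal{A}$, running an honest \textsf{Setup} and sending the real proof $\prod_{SP}^{1}$. When $\mathcal{A}$ outputs $(\Omega_{U},\prod_{U})$, $\mathcal{S}$ invokes the knowledge extractor of the proof of knowledge $\prod_{U}$ to recover the witnesses $(i,j,r_{1},\ldots,r_{10})$ together with the signature-like membership values $\Gamma_{1}^{i},\Gamma_{2}^{j},\Gamma_{1}^{i+l},\Gamma_{2}^{j+k}$. The $q$-SDH assumption guarantees that such membership witnesses exist only for indices inside the valid grid $\{1,\ldots,m\}\times\{1,\ldots,n\}$: a witness $\Gamma_{1}^{i}=g_{1}^{1/(\alpha_{1}+i)}$ for an out-of-range $i$ would constitute a forged Boneh--Boyen signature and hence a $q$-SDH solution. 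Having extracted a legal query $(i,j,l,k)$, $\mathcal{S}$ forwards $(sid,user,O,S)$ to $\mathcal{F}_{OLBSQ}$, receives $\hat{\mathfrak{D}}$, and programs $\Omega_{SP}$ and the zero-knowledge proof $\prod_{SP}^{2}$ so that the decommitted values $P_{\mu,\nu}$ yield exactly the services in $\hat{\mathfrak{D}}$.

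The delicate point in this case is \textbf{sender security}: I must argue that the view of $\mathcal{A}$ is computationally independent of the services $M_{i',j'}$ lying \emph{outside} the queried region. This is where $q$-PDDH enters. I would proceed by a hybrid argument over the cells not covered by the query, replacing, one block at a time, the real ciphertext component $B_{i',j'}=e(A_{i',j'},\mathfrak{h})\cdot M_{i',j'}$ by an encryption of a uniformly random message. The gap between consecutive hybrids is bounded by a $q$-PDDH distinguisher: the key material $C_{i,3}=H^{x^{i}}$ and $D_{j,3}=H^{y^{j}}$, with $H=e(\mathfrak{g},\mathfrak{h})$, are precisely powers $H^{\zeta^{t}}$, and the user only obtains the combinations needed for the queried range, so the powers governing unqueried cells are pseudorandom under $q$-PDDH and the corresponding $B_{i',j'}$ perfectly hide $M_{i',j'}$. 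Keeping the simulated $\prod_{SP}^{2}$ sound and consistent as these ciphertexts are switched, while the reduction does not hold $\mathfrak{h}$ in the clear, is the principal obstacle; it requires the zero-knowledge simulator of $\prod_{SP}^{2}$ to be run against the embedded $q$-PDDH challenge.

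Finally I would treat the \textbf{corrupted service provider}, which establishes \textbf{receiver security}. Now $\mathcal{S}$ plays the honest $\mathcal{U}$ and must produce a query transcript without knowing the true location. Using the knowledge extractor of $\prod_{SP}^{1}$ (and of $\prod_{SP}^{2}$), $\mathcal{S}$ recovers $\mathfrak{h}$ and the encryption exponents $(x,y)$, from which it reconstructs $(\mathfrak{L},\mathfrak{D})$ to register with $\mathcal{F}_{OLBSQ}$. To simulate $\Omega_{U}$, $\mathcal{S}$ commits to arbitrary in-range coordinates and runs the zero-knowledge simulator of $\prod_{U}$; because the commitments $(E_{1},E_{2},F_{1},F_{2},J_{1},J_{2},I_{1},\ldots,I_{4})$ are randomized by $r_{1},\ldots,r_{10}$ and are therefore hiding, and because $\prod_{U}$ is zero-knowledge, the simulated transcript is indistinguishable from a real one, so the only genuine information leaked to $\mathcal{A}$ is the query size $S=l\times k$, exactly as permitted by $\mathcal{F}_{OLBSQ}$. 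Composing the three arguments and summing the hybrid distances then yields the required bound $\left| {\bf Real}_{\mathcal{P},\mathcal{E},\mathcal{A}}-{\bf Ideal}_{\mathcal{F}_{OLBSQ},\mathcal{E},\mathcal{S}}\right|\leq\epsilon(\ell)$.
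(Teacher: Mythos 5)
Your proposal is correct and follows essentially the same route as the paper: the same two corruption cases, extraction of $\mathfrak{h}$ from $\prod_{SP}^{1}$ to simulate against a corrupted service provider, extraction from $\prod_{U}$ plus a Boneh--Boyen/$q$-SDH forgery argument to enforce that the queried indices lie in the grid, and a $q$-PDDH step that replaces the $B_{i,j}$ masks by random elements while $\prod_{SP}^{2}$ is simulated against the embedded challenge. The only cosmetic differences are that the paper replaces all $B_{i,j}$ in a single game hop (giving one $Adv^{q\text{-PDDH}}$ term rather than a cell-by-cell hybrid) and never needs to extract $(x,y)$ in the corrupted-provider case, since $\mathfrak{h}$ alone suffices to recover $\mathfrak{D}$.
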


To prove Theorem \ref{theo:1}, we consider the cases where either the user or the service provider is corrupted. We show that there exists a simulator $\mathcal{S}$ such that it can interact with the ideal functionality $\mathcal{F}_{OLBSQ}$ (simply denoted as $\mathcal{F}$) and the environment $\mathcal{E}$ appropriately and ${\bf Real}_{\mathcal{P},\mathcal{E},\mathcal{A}}$ and ${\bf Ideal}_{\mathcal{F},\mathcal{E},\mathcal{S}}$ are indistinguishable. 

In order to prove the indistinguishability between ${\bf Real}_{\mathcal{P},\mathcal{E},\mathcal{A}}$ and ${\bf Ideal}_{\mathcal{F},\mathcal{E},\mathcal{S}}$, a sequence of hybrid games {\bf Game}$_{0}$, {\bf Game}$_{1}$, $\cdots$, {\bf Game}$_{n'}$ are defined. For each {\bf Game}$_{i}$, we show that there exists a simulator $Sim_{i}$ that runs $\mathcal{A}$ as a subroutine and provides $\mathcal{E}$'s view, for $i=1,2,\cdots,n'$. {\bf Hybrid}$_{\mathcal{E},Sim_{i}}(\ell)$ stands for the probability that $\mathcal{E}$ outputs $1$ running in the world provided by $Sim_{i}$. $Sim_{0}$ runs $\mathcal{A}$ and other honest parties in the real-world experiment, so {\bf Hybrid}$_{\mathcal{E},Sim_{0}}$ $={\bf Real}_{\mathcal{P},\mathcal{E},\mathcal{A}}$. $Sim_{n'}$ runs $\mathcal{S}$ in the ideal-world experiment, so {\bf Hybrid}$_{\mathcal{E},Sim_{n'}}$ $={\bf Ideal}_{\mathcal{F},\mathcal{E},\mathcal{S}}$. 

Therefore,
\begin{equation*}
\begin{array}{ll}
\left|{\bf Real}_{\mathcal{P},\mathcal{E},\mathcal{A}}-{\bf Ideal}_{\mathcal{F},\mathcal{E},\mathcal{S}}\right| & \leq \left|{\bf Hybrid}_{\mathcal{E},Sim_{0}}-{\bf Hybrid}_{\mathcal{E},Sim_{1}}\right|+\left|{\bf Hybrid}_{\mathcal{E},Sim_{1}}-{\bf Hybrid}_{\mathcal{E},Sim_{2}}\right|\\
&+\cdots+\left|{\bf Hybrid}_{\mathcal{E},Sim_{n'-1}}-{\bf Hybrid}_{\mathcal{E},Sim_{n'}}\right|.
\end{array}
\end{equation*}

\begin{lemma} {\bf (Users' Privacy)} For all environments $\mathcal{E}$ and all real world adversaries $\mathcal{A}$ who controls the service provider, there exists an ideal-world simulator $\mathcal{S}$ such that 
\begin{equation*}
\left| {\bf Real}_{\mathcal{P},\mathcal{E},\mathcal{A}}-{\bf Ideal}_{\mathcal{F},\mathcal{E},\mathcal{S}}\right|\leq \frac{1}{2^{\ell}}.
\end{equation*}\label{l1}
\end{lemma}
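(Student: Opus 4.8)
The plan is to exhibit an ideal-world simulator $\mathcal{S}$ that runs the corrupt service provider $\mathcal{A}$ as a subroutine, impersonates the honest user towards $\mathcal{A}$ while holding no information about the real query, and impersonates the corrupt $\mathcal{SP}$ towards $\mathcal{F}$. The whole argument is information-theoretic: every component of the user's outgoing message $\Omega_{U}$ is a perfectly hiding commitment and the attached $\prod_{U}$ is zero-knowledge, so $\mathcal{A}$'s view is reproducible without knowing $O=(i,j)$. Accordingly I expect the $q$-SDH and $q$-PDDH assumptions to play no role here (they belong to the complementary lemma for a corrupt user); the entire $\frac{1}{2^{\ell}}$ slack will come from the knowledge error of the extractor, whose challenge space has size $p\approx 2^{\ell}$.

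First I would have $\mathcal{S}$ receive from $\mathcal{A}$ the public parameters $PP$, the encrypted database $\mathcal{D}'$, and the proof $\prod_{SP}^{1}=\mbox{PoK}\{(\mathfrak{h}):H=e(\mathfrak{g},\mathfrak{h})\}$. Running the knowledge extractor on $\prod_{SP}^{1}$ by rewinding $\mathcal{A}$, $\mathcal{S}$ recovers the witness $\mathfrak{h}$; note that, by non-degeneracy of the pairing and $\mathfrak{g}$ being a generator, this $\mathfrak{h}$ is the \emph{unique} preimage of $H$ under $e(\mathfrak{g},\cdot)$. Since $A_{i,j}$ is public and $B_{i,j}=e(A_{i,j},\mathfrak{h})\cdot M_{i,j}$, knowledge of $\mathfrak{h}$ lets $\mathcal{S}$ decrypt every cell, $M_{i,j}=B_{i,j}/e(A_{i,j},\mathfrak{h})$, reconstruct the service set $\mathfrak{D}$ and grid $\mathfrak{L}$, and submit $(sid, service\_provider,\mathfrak{L},\mathfrak{D})$ to $\mathcal{F}$. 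The only failure event in this step is the extractor aborting, bounded by the $1/p\le 1/2^{\ell}$ knowledge error.

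Next, when the honest user invokes the transfer, $\mathcal{F}$ issues the service\_request, leaking only the permitted query size $S=l\times k$. Now $\mathcal{S}$ must fabricate $\Omega_{U}$ and $\prod_{U}$ without the true $(i,j)$. I would have it fix an arbitrary dummy point, draw fresh $r_{1},\dots,r_{10}\stackrel{\$}{\leftarrow}\mathbb{Z}_{p}$, and form $E_{1},E_{2},F_{1},F_{2},J_{1},J_{2},I_{1},I_{2},I_{3},I_{4}$ exactly as the protocol prescribes for that dummy point and size $S$. Because every commitment is blinded by an independent uniform exponent (e.g. $E_{1}=\mathfrak{g}^{-r_{1}}g_{1}^{i}$ and $F_{1}=\mathfrak{g}^{r_{3}}C_{i,1}$ are uniform in $\mathbb{G}$ whenever $r_{1},r_{3}$ are uniform), the distribution of $\Omega_{U}$ is \emph{identical} to that of a genuine query, independently of the hidden location; $\mathcal{S}$ then replaces $\prod_{U}$ by the output of its zero-knowledge simulator. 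Hence $\mathcal{A}$'s incoming view carries no information about $O$, which is precisely users' location privacy.

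Finally, $\mathcal{S}$ reads $\mathcal{A}$'s reply $(\Omega_{SP},\prod_{SP}^{2})$, verifies $\prod_{SP}^{2}$, and returns $b=1$ to $\mathcal{F}$ iff the proof accepts (so the honest user's ideal output is the $\hat{\mathfrak{D}}\in(O,S)$ drawn from the extracted $\mathfrak{D}$), otherwise $b=0$. The step I expect to be the main obstacle is proving that this ideal output coincides with what the honest user would reconstruct against $\mathcal{A}$ in the real protocol. Here I would invoke the soundness of $\prod_{SP}^{2}$: its clauses $\frac{K_{\mu,\nu}}{E_{1}g_{1}^{\mu}E_{2}h_{1}^{\nu}}=F_{1}^{x^{\mu}}J_{1}^{y^{\nu}}$, the $C_{\mu,2}$ and $D_{\nu,2}$ relations, $L_{\mu,\nu}=e(K_{\mu,\nu},\mathfrak{h})$ and $H=e(\mathfrak{g},\mathfrak{h})$ force the returned keys to be computed with the same unique $\mathfrak{h}$ extracted above, so that de-committing with $(r_{1},r_{2},r_{3},r_{5})$ yields exactly $P_{\mu,\nu}=e(g_{1}^{i+\mu}h_{1}^{j+\nu}g_{2}^{x^{i+\mu}}h_{2}^{y^{j+\nu}},\mathfrak{h})$ and hence the correct $M_{i+\mu,j+\nu}=B_{i+\mu,j+\nu}/P_{\mu,\nu}$, matching $\mathcal{S}$'s decryption. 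Tying the binding of the setup parameters to this soundness is the delicate part; once it is established, the single extractor-failure event dominates and the total statistical distance is at most $\frac{1}{2^{\ell}}$, as claimed.
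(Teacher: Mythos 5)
Your proposal follows essentially the same route as the paper: extract $\mathfrak{h}$ from $\prod_{SP}^{1}$ via the knowledge extractor (charging the $1/2^{\ell}$ knowledge error as the only loss), use it to decrypt $\mathcal{D}'$ into $\mathfrak{D}$ and hand that to $\mathcal{F}$, and simulate the honest user towards $\mathcal{A}$. You additionally spell out two points the paper's hybrid argument leaves implicit --- that $\Omega_{U}$ can be generated from a dummy start point because the commitments are perfectly hiding and $\prod_{U}$ is zero-knowledge, and that soundness of $\prod_{SP}^{2}$ is what ties the ideal-world output to the real user's reconstruction --- which strengthens rather than changes the argument.
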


\begin{proof} Given a real cheating service provider, we can construct a simulator $\mathcal{S}$ in the ideal world experiment such that for any $\mathcal{E}$ cannot distinguish ${\bf Real}_{\mathcal{P},\mathcal{E},\mathcal{A}}$ and ${\bf Ideal}_{\mathcal{F},\mathcal{E},\mathcal{S}}$.
\medskip

\noindent{\bf Game}$_{0}$: $Sim_{0}$ runs $\mathcal{A}$ and the honest user as in the real-world experiment, hence 
\begin{equation*}
{\bf Real}_{\mathcal{P},\mathcal{E},\mathcal{A}}={\bf Hybrid}_{\mathcal{E},Sim_{0}}.
\end{equation*}

\noindent{\bf Game}$_{1}$: $Sim_{1}$ runs the extractor for the proof of knowledge  $\prod_{SP}^{1}: \mbox{PoK}\left\{(\mathfrak{h}):H=e(\mathfrak{g},\mathfrak{h})\right\}$ to extract the knowledge $\mathfrak{h}$ at the first service transfer query dictated by $\mathcal{A}$. If the extractor fails to exact $\mathfrak{h}$, $Sim_{1}$ returns $\perp$ to $\mathcal{E}$; otherwise, $Sim_{1}$ runs $\mathcal{A}$ interacting with $\mathcal{U}$. The difference between ${\bf Hybrid}_{\mathcal{E},Sim_{1}}$ and ${\bf Hybrid}_{\mathcal{E},Sim_{0}}$ is the knowledge error of the proof of knowledge $\prod_{SP}^{1}$. Hence,

\begin{equation*}
\left|{\bf Hybrid}_{\mathcal{E},Sim_{0}}-{\bf Hybrid}_{\mathcal{E},Sim_{1}}\right|\leq \frac{1}{2^{\ell}}.
\end{equation*}

\noindent{\bf Game}$_{2}$: $Sim_{2}$ runs exactly as $Sim_{1}$ in {\bf Game}$_{1}$, except it can retrieve all messages holden by $\mathcal{SP}$. $Sim_{2}$ runs $\mathcal{A}$ to obtain the encrypted $\mathcal{D}'=\left\{((A_{i,j},B_{i,j})_{i=1}^{m})_{j=1}^{n}\right\}$. $Sim_{2}$ can computes $M_{i,j}=\frac{B_{i,j}}{e(\mathfrak{h},A_{i,j})}$ and $\mathcal{D}=\left\{M_{i,j}\right\}$ where $i=1,2,\cdots,m$ and $j=1,2,\cdots,n$. Hence, 
\begin{equation*}
{\bf Hybrid}_{\mathcal{E},Sim_{1}}={\bf Hybrid}_{\mathcal{E},Sim_{2}}.
\end{equation*}

\noindent{\bf Game}$_{3}$: We construct a simulator $\mathcal{S}$ that plays the role as $\mathcal{A}$ in {\bf Game}$_{2}$. $\mathcal{S}$ only relays the communications between $\mathcal{E}$ and $\mathcal{A}$. When receiving a message $(sid, service\_provider,\cdots)$, $\mathcal{S}$ returns $\mathcal{D}$ to $\mathcal{E}$. When receiving a message $(sid,user,O,S)$, $\mathcal{S}$ first checks whether $(O;S)\in \mathfrak{L}$. If it is not, $\mathcal{S}$ returns $(sid,0)$ to $\mathcal{E}$; otherwise, $\mathcal{S}$ returns $(sid,1)$ to $\mathcal{E}$. Hence, 
\begin{equation*}
{\bf Hybrid}_{\mathcal{E},Sim_{2}}={\bf Hybrid}_{\mathcal{E},Sim_{3}}={\bf Ideal}_{\mathcal{F},\mathcal{E},\mathcal{S}}.
\end{equation*}

Therefore,
\begin{equation*}
\begin{array}{c}
\left|{\bf Real}_{\mathcal{P},\mathcal{E},\mathcal{A}}-{\bf Ideal}_{\mathcal{F},\mathcal{E},\mathcal{S}}\right|
\leq \left|{\bf Hybrid}_{\mathcal{E},Sim_{0}}-{\bf Hybrid}_{\mathcal{E},Sim_{1}}\right|+\left|{\bf Hybrid}_{\mathcal{E},Sim_{1}}-{\bf Hybrid}_{\mathcal{E},Sim_{2}}\right|\\
+\left|{\bf Hybrid}_{\mathcal{E},Sim_{2}}-{\bf Hybrid}_{\mathcal{E},Sim_{3}}\right|\leq \frac{1}{2^{\ell}}.
\end{array}
\end{equation*}
\qed
\end{proof}

\begin{lemma} {\bf (Service Provider's Security)} For all environments $\mathcal{E}$ and all real world adversaries $\mathcal{A}$ who controls the user, there exists an ideal-world simulator $\mathcal{S}$ such that 
\begin{equation*}
\left| {\bf Real}_{\mathcal{P},\mathcal{E},\mathcal{A}}-{\bf Ideal}_{\mathcal{F},\mathcal{E},\mathcal{S}}\right|\leq \frac{1}{p}+2Adv_{\mathcal{A}}^{\mbox{q-SDH}}+Adv_{\mathcal{A}}^{\mbox{A-q-PDDE}}.
\end{equation*}\label{l1}
\end{lemma}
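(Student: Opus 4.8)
The plan is to mirror the structure of the proof of Lemma \ref{l1} (Users' Privacy), but now with the simulator $\mathcal{S}$ playing the honest service provider against a cheating user $\mathcal{A}$. The essential twist is that in the ideal world $\mathcal{S}$ does not know the real service set $\mathfrak{D}$ at setup time; it learns a service only through $\mathcal{F}$, and only after it has extracted a \emph{legitimate} query. I would therefore have $\mathcal{S}$ run {\sf Setup} honestly except that it replaces each $B_{i,j}$ by a uniformly random element of $\mathbb{G}_{\tau}$, keeps all the remaining public parameters genuine, and proves $\prod_{SP}^{1}$ honestly (it does know the simulated $\mathfrak{h}$). When $\mathcal{A}$ sends $(\Omega_{U},\prod_{U})$, the simulator runs the knowledge extractor of $\prod_{U}$ to recover $(i,j,l,k,r_{1},\dots,r_{10},\dots)$, forwards the query $(O=(i,j),S=l\times k)$ to $\mathcal{F}$, obtains $\hat{\mathfrak{D}}=\{M_{i+\mu,j+\nu}\}$, and programs $\Omega_{SP}$ (and the simulated $\prod_{SP}^{2}$) so that $\mathcal{A}$'s de-commitment reproduces exactly these services.

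I would organise the argument as a short chain of hybrids between ${\bf Real}_{\mathcal{P},\mathcal{E},\mathcal{A}}$ and ${\bf Ideal}_{\mathcal{F},\mathcal{E},\mathcal{S}}$. In {\bf Game}$_{1}$ the simulator extracts the witness of $\prod_{U}$ and produces $\prod_{SP}^{2}$ with the zero-knowledge simulator; the statistical gap introduced here is the knowledge/simulation error of the proof systems, which I expect to supply the $\frac{1}{p}$ term. In {\bf Game}$_{2}$ I verify that the extracted commitment opens to \emph{genuine} public parameters. The verification equations of $\prod_{U}$, e.g. $e(I_{1},W_{1}^{-1})=e(g_{1},g_{1})^{-r_{7}}\cdot e(g_{1},I_{1})^{i}$ and its analogues for $I_{2},I_{3},I_{4}$, force $I_{1}$ to be a randomisation of the Boneh--Boyen value $\Gamma_{1}^{i}=g_{1}^{1/(\alpha_{1}+i)}$ on the committed index $i$ (respectively of $\Gamma_{2}^{j}$, $\Gamma_{1}^{i+l}$, $\Gamma_{2}^{j+k}$ on $j$, $i+l$, $j+k$). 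Since $\mathcal{SP}$ publishes such values only for $i\in\{1,\dots,m\}$ and $j\in\{1,\dots,n\}$, an extracted opening to an out-of-range index yields a fresh $q$-SDH solution $(c,g^{1/(\zeta+c)})$. Because the start and end coordinates of each dimension are authenticated by the same $\alpha_{1}$- (resp. $\beta_{1}$-) keyed family, two independent reductions — one for rows, one for columns — bound this step by $2\,Adv_{\mathcal{A}}^{\mbox{q-SDH}}$ and, simultaneously, guarantee that $(O;S)\subset\mathfrak{L}$, so that $\mathcal{F}$ genuinely returns $\hat{\mathfrak{D}}$.

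The last hybrid replaces the honest encryptions $B_{i,j}=e(A_{i,j},\mathfrak{h})\cdot M_{i,j}$ of the never-queried cells by uniformly random masks, which is what $\mathcal{S}$ actually uses. This is the step I expect to be the main obstacle, and it is where $q$-PDDH enters: the mask $e(A_{i,j},\mathfrak{h})$ unfolds into a product of $H^{x^{i}}=C_{i,3}$ and $H^{y^{j}}=D_{j,3}$ type terms, so the published $C_{i,3},D_{j,3}$ together with the challenge elements form precisely a $q$-PDDH instance with $\zeta=x$ (resp. $y$); any distinguisher between real and random $B_{i,j}$ for cells that are never recovered therefore breaks $q$-PDDH, contributing $Adv_{\mathcal{A}}^{\mbox{q-PDDH}}$. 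For the cells that \emph{are} recovered I must check the complementary algebraic identity: programming $L_{\mu,\nu}$ from the extracted randomness $(r_{1},r_{2},r_{3},r_{5})$ and the public $B_{i+\mu,j+\nu}$ reproduces $P_{\mu,\nu}=e(A_{i+\mu,j+\nu},\mathfrak{h})$ exactly, so that $\mathcal{A}$'s output $M_{i+\mu,j+\nu}=B_{i+\mu,j+\nu}/P_{\mu,\nu}$ agrees with $\mathcal{F}$'s answer. This is a routine expansion of the de-commitment formula once $L_{\mu,\nu}=e(K_{\mu,\nu},\mathfrak{h})$ is substituted.

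Summing the three contributions over the hybrids gives the claimed bound $\frac{1}{p}+2\,Adv_{\mathcal{A}}^{\mbox{q-SDH}}+Adv_{\mathcal{A}}^{\mbox{q-PDDH}}$. The delicate points I would be most careful about are (i) composing the knowledge extractor for $\prod_{U}$ with the zero-knowledge simulator for $\prod_{SP}^{2}$ across several sequential service-transfer queries without rewinding conflicts, and (ii) embedding the $q$-SDH challenge into the $\Gamma_{1},\Gamma_{2}$ families in {\bf Game}$_{2}$ while still being able to answer all of $\mathcal{A}$'s legitimate in-range queries faithfully.
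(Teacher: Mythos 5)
Your proposal follows essentially the same route as the paper's proof: extract the witness of $\prod_{U}$ and simulate the service provider's proofs (contributing the $\frac{1}{p}$ knowledge-error term), reduce an out-of-range extracted index to a Boneh--Boyen forgery and hence to $q$-SDH separately for the row and column families (giving $2\,Adv_{\mathcal{A}}^{\mbox{q-SDH}}$), replace the masks $B_{i,j}$ by random group elements under $q$-PDDH via the $C_{i,3},D_{j,3}$ terms, and finally program $\Omega_{SP}$ from the extracted randomness and $\mathcal{F}$'s answer. The paper merely splits your first hybrid into two (extraction, then simulated $\prod_{SP}^{2}$ with programmed $L_{\mu,\nu}$) and replaces \emph{all} $B_{i,j}$ by random in its Game~4, but the decomposition, the key reductions, and the resulting bound are the same.
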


\begin{proof} Given a real cheating user, we can construct a simulator $\mathcal{S}$ in the ideal world experiment such at for any $\mathcal{E}$ cannot distinguish ${\bf Real}_{\mathcal{P},\mathcal{E},\mathcal{A}}$ and ${\bf Idea}_{\mathcal{F},\mathcal{E},\mathcal{S}}$.
\medskip

\noindent{\bf Game}$_{0}$: $Sim_{0}$ runs $\mathcal{A}$ and the honest service provider as in the real world experiment, hence,
\begin{equation*}
{\bf Real}_{\mathcal{P},\mathcal{E},\mathcal{A}}={\bf Hybrid}_{\mathcal{E},Sim_{0}}.
\end{equation*}

\noindent{\bf Game}$_{1}$: $Sim_{1}$ runs exactly as $Sim_{0}$ in {\bf Game}$_{0}$, except that $Sim_{1}$ extract the knowledge  $(i,j,r_{1},r_{2},$ $r_{3},r_{4},r_{5},r_{6},r_{7},r_{8},r_{9},r_{10},C_{i,1},C_{i,2},D_{j,1},D_{j,2},\Gamma_{1}^{i},\Gamma_{2}^{j},\Gamma_{1}^{i+l},\Gamma_{2}^{j+k})$ from the proof $\prod_{U}$. $Sim_{1}$ first generates a simulated proof of $\prod_{SP}^{1}: \mbox{PoK}\left\{(\mathfrak{h}): H=e(\mathfrak{g},\mathfrak{h})\right\}$, and then runs the extractor of the knowledge proof of $\prod_{U}$ to extract $(i,j,r_{1},r_{2},r_{3},r_{4},r_{5},r_{6},r_{7},r_{8},r_{9},r_{10},C_{i,1},C_{i,2},D_{j,1},D_{j,2},\Gamma_{1}^{i},\Gamma_{2}^{j},$ $\Gamma_{1}^{i+l},\Gamma_{2}^{j+k})$. Due to the knowledge proof of $\prod_{U}$ is perfect zero-knowledge, we have 
\begin{equation*}
\left|{\bf Hybride}_{\mathcal{E},Sim_{0}}-{\bf Hybrid}_{\mathcal{E},Sim_{1}}\right|\leq \frac{1}{p}
\end{equation*}

\noindent{\bf Game}$_{2}$: $Sim_{2}$ runs exactly as $Sim_{1}$ in {\bf Game}$_{1}$, except that: (1)$ i\notin\{1,2,\cdots,m\}$ or $i+l\notin\{1,2,\cdots,m\}$; $j\notin\{1,2,\cdots,n\}$ or $j+k\notin\{1,2,\cdots,n\}$.
\begin{clm}\label{clm:1}
If the $q$-SDH assumption hold on $(e,p,\mathbb{G},\mathbb{G}_{\tau})$, we have 
\begin{equation*}
\left|{\bf Hybrid}_{\mathcal{E},Sim_{1}}-{\bf Hybrid}_{\mathcal{E},Sim_{2}}\right|\leq 2 Adv_{\mathcal{A}}^{q-SDH}
\end{equation*}
where $q=max\{m+1,n+1\}$.
\end{clm}
\medskip

\noindent{\bf Game}$_{3}:$ $Sim_{3}$ runs exactly as $Sim_{2}$ in {\bf Game}$_{2}$, except that $Sim_{3}$ outputs $(A_{\mu,\nu},L_{\mu,\nu})$ and the proof $\prod_{SP}^{2}$. $Sim_{3}$ computes $A_{\mu,\nu}=\mathfrak{g}^{-(r_{1}+r_{2})}(\mathfrak{g}^{x^{\mu}})^{r_{3}}(\mathfrak{g}^{y^{\nu}})^{r_{5}}g_{1}^{i+\mu}h_{1}^{j+\nu}g_{2}^{x^{i+\nu}}h_{2}^{y^{j+\nu}}$ and $L_{\mu,\nu}=H^{r_{1}+r_{2}}\cdot (H^{x^{\mu}})^{r_{3}}(H^{y^{\nu}})^{r_{5}}\cdot \frac{B_{i+\mu,j+\nu}}{M_{i+\mu,j+\nu}}$, and generates a simulated proof of $\prod_{SP}^{2}=\mbox{PoK}\Big\{(x^{\mu},y^{\nu},\mathfrak{h}):\big((\frac{K_{\mu,\nu}}{E_{1}g_{1}^{\mu}E_{2}h_{1}^{\nu}}=F_{1}^{x^{\mu}}H_{1}^{y^{\nu}}\wedge\frac{e(C_{\mu,2},W_{1})}{e(g_{2},g_{2})}=e(C_{\mu,2},g_{2})^{-x^{\mu}}\wedge$$~\frac{e(D_{\nu,2},W'_{1})}{e(h_{2},h_{2})}=e(D_{\nu,2},h_{2})^{-y^{\nu}}~\wedge~L_{\mu,\nu}=e(\mathfrak{h},K_{\mu,\nu}))_{\mu=0}^{l}\big)_{\nu=0}^{k}\wedge H=e(\mathfrak{g},\mathfrak{h})\Big\}$. Due to the perfect of the zero-knowlege proof, we have that 
\begin{equation*}
{\bf Hybrid}_{\mathcal{E},Sim_{2}}={\bf Hybrid}_{\mathcal{E},Sim_{3}}.
\end{equation*}

\noindent${\bf Game}_{4}:$ $Sim_{4}$ runs  exactly as $Sim_{3}$ in {\bf Game}$_{3}$, except that the values  $(B_{1,1},B_{1,2}, \cdots, B_{m,n})$ are replaced by random elements in $\mathbb{G}_{\tau}$. In this case, the proof $\prod_{SP}^{2}$  in {\bf Game}$_{3}$ is a simulated proof of a false statement. 
\begin{clm}\label{clm:2}
If the $q$-PDDH assumption holds on $(e,p,\mathbb{G},\mathbb{G}_{\tau})$, we have that 
\begin{equation*}
\left|{\bf Hybrid}_{\mathcal{E},Sim_{3}}-{\bf Hybrid}_{\mathcal{E},Sim_{4}}\right|\leq Adv_{\mathcal{A}}^{q-PDDH}
\end{equation*}
where $q=max\{m^{2},n^{2}\}$.
\end{clm}
\medskip

\noindent{\bf Game}$_{5}:$ We construct a simulator $\mathcal{S}$ that works as $\mathcal{A}$ in {\bf Game}$_{4}$. $\mathcal{S}$ only forward the communication between $\mathcal{E}$ and $\mathcal{A}$. When receiving a message $(sid,service\_provider,\mathcal{D})$, stores $M_{i,j}\in\mathcal{D}$ for $i=0,1,2,\cdots,m$ and $j=0,1,2,\cdots,n$.  Upon receiving a message $(sid,user,O,S)$, $\mathcal{S}$ runs the extractor of the proof $\prod_{U}$ to extract $(i,j,r_{1},r_{2},r_{3},r_{4},r_{5},r_{6},r_{7},r_{8},r_{9},r_{10},C_{i,1},C_{i,2},D_{j,1},D_{j,2},\Gamma_{1}^{i},\Gamma_{2}^{j},\Gamma_{1}^{i+l},$ $\Gamma_{2}^{j+k})$. If the extraction fail, $\mathcal{S}$ sends noting to $\mathcal{U}$, otherwise, sends $(sid,service\_request)$ to $\mathcal{SP}$. If $b=0$, returns $(sid,\perp)$ to $\mathcal{U}$. If $b=1$, $\mathcal{S}$ computes $A_{\mu,\nu}=\mathfrak{g}^{-(r_{1}+r_{2})}(\mathfrak{g}^{x^{\mu}})^{r_{5}}(\mathfrak{g}^{y^{\nu}})^{r_{5}}g_{1}^{i+\mu}h_{1}^{j+\nu}g_{2}^{x^{i+\nu}}h_{2}^{y^{j+\nu}}$ and $L_{\mu,\nu}=H^{-(r_{1}+r_{2}}\cdot (H^{x^{\mu}})^{r_{3}}(H^{y^{\nu}})^{r_{5}}\cdot \frac{B_{i+\mu,j+\nu}}{M_{i+\mu,j+\nu}}$, and generates a simulated proof. Hence, 
\begin{equation*}
{\bf Hybrid}_{\mathcal{E},Sim_{4}}={\bf Hybrid}_{\mathcal{E},Sim_{5}}={\bf Ideal}_{\mathcal{F},\mathcal{E},\mathcal{S}}.
\end{equation*}

Therefore,
\begin{equation*}
\begin{array}{ll}
\left|{\bf Real}_{\mathcal{F},\mathcal{E},\mathcal{A}}-{\bf Ideal}_{\mathcal{F},\mathcal{E},\mathcal{S}}\right| & \leq \left|{\bf Hybrid}_{\mathcal{E},Sim_{0}}-{\bf Hybrid}_{\mathcal{E},Sim_{1}}\right|+\left|{\bf Hybrid}_{\mathcal{E},Sim_{1}}-{\bf Hybrid}_{\mathcal{E},Sim_{2}}\right|\\
&+\left|{\bf Hybrid}_{\mathcal{E},Sim_{2}}-{\bf Hybrid}_{\mathcal{E},Sim_{3}}\right|+\left|{\bf Hybrid}_{\mathcal{E},Sim_{3}}-{\bf Hybrid}_{\mathcal{E},Sim_{4}}\right|\\
&+\left|{\bf Hybrid}_{\mathcal{E},Sim_{4}}-{\bf Hybrid}_{\mathcal{E},Sim_{5}}\right|\leq \epsilon(\ell).
\end{array}
\end{equation*}
\qed
\end{proof}

\noindent{\bf Proof of  Claim \ref{clm:1}.} We prove this claim by constructing an algorithm $\mathcal{B}$ that can break the unfogeability under weak chosen-message attack of the Boneh-Boyen signature scheme. According to the proof given in \cite{bb:sig}, $\mathcal{B}$ can solve the $q$-SDH assumption.

Suppose that there exists an environment $\mathcal{E}$ that can distinguish {\bf Game}$_{1}$ and {\bf Game}$_{2}$, $\mathcal{B}$ can forge a signature as follows. We consider the following four cases:  {\em Case-I.} $\mathcal{B}$ outputs a forged signature for $i$ or $i+l$;  {\em Case-II.} $\mathcal{B}$ outputs a forged signature for $j$ or $j+k$.
\medskip

\noindent{\em Case-I.} Given $(g,g^{\zeta},g^{\zeta^{2}},\cdots,g^{\zeta^{q}}, h,h^{\zeta})$, $\mathcal{B}$ sets $g_{1}=g$ and $\mathfrak{h}=h$. $\mathcal{B}$ selects $\gamma_{1},\gamma_{2},\gamma_{3},\gamma_{4}\stackrel{R}{\leftarrow}\mathbb{Z}_{p}$  and computes  $g_{2}=g_{1}^{\gamma_{1}}$,  $h_{1}=g_{1}^{\gamma_{2}}$  $h_{2}=g_{1}^{\gamma_{3}}$,   $\mathfrak{g}=g_{1}^{\alpha_{4}}$,  and sets  $f(\zeta)=(\zeta+1)(\zeta+2)\cdots (\zeta+m)=\sum_{z=0}^{m}a_{z}\zeta^{z}$, $g_{1}=g^{f(\zeta)}$,  and $f_{i}(\zeta)=\frac{f(\zeta)}{\zeta+i}=\sum_{w=0}^{m-1}b_{w}\zeta^{w}$ where $a_{z},b_{w}\in\mathbb{Z}_{p}$ and $i=1,2,\cdots,m$. $\mathcal{B}$ selects $\alpha_{2},\beta_{1},\beta_{2},x,y\stackrel{\$}{\leftarrow}\mathbb{Z}_{p}$, and computes $H=e(\mathfrak{g},\mathfrak{h})$ and sets $\alpha_{1}=\zeta$. $\mathcal{B}$ computes
\begin{equation*}
\begin{array}{c}
W_{1}=\prod_{k=0}^{m-1} (g^{\zeta^{k+1}})^{a_{z}}=(g^{\sum_{z=0}^{m}a_{z}\zeta^{z}})^{\zeta}=(g^{f(\zeta))\zeta}=g_{1}^{\zeta},~ W_{2}=g_{2}^{\alpha_{2}},  ~ W_{1}'=h_{2}^{\beta_{1}}, ~ W_{2}'=h_{1}^{\beta_{2}}, \\  \Gamma_{1}^{i}=\prod_{w=0}^{m-1}(g^{\zeta})^{b_{w}}=g^{\sum_{w=0}^{m-1}b_{w}\zeta^{w}}= g^{f_{i}(\zeta)}=g^{\frac{f(\zeta)}{\zeta+i}}=g_{1}^{\frac{1}{\alpha_{1}+i}}, ~ \Gamma_{2}^{j}=h_{1}^{\frac{1}{\beta_{1}+j}}, ~A_{i,j}=(g_{1}^{i}h_{1}^{j}g_{2}^{x^{i}}h_{2}^{y^{j}}), \\
 B_{i,j}=e(\mathfrak{h},A_{i,j})\cdot M_{i,j}, ~  \left(C_{i,1}=g_{2}^{x^{i}}, ~ C_{i,2}=g_{2}^{\frac{1}{\alpha_{2}+x^{i}}}, ~ C_{i,3}=e(\mathfrak{h},\mathfrak{g})^{x^{i}}\right), \\
 \left(D_{j,1}=h_{2}^{y^{j}}, ~ D_{j,2}=h_{2}^{\frac{1}{\beta_{2}+y^{j}}}, ~ D_{j,3}=e(\mathfrak{h},\mathfrak{g})^{y^{j}}\right) \mbox{ for}~  i=1,2,\cdots,m ~\mbox{and}~ j=1,2,\cdots,n.
\end{array}
 \end{equation*}
 
 The secret key is $SK=(\zeta,\alpha_{2},\beta_{1},\beta_{2},x,y,\mathfrak{h})$ and  the public parameters  are 
$PP=\big(e,p,\mathbb{G},\mathbb{G}_{\tau},$ $\mathfrak{g},g_{1},g_{2},h_{1},h_{2},H,W_{1},W_{2},W'_{1},W'_{2},\Gamma_{1}^{1},\cdots,\Gamma_{1}^{m},\Gamma_{2}^{1},\cdots,\Gamma_{2}^{n},((A_{1,1},B_{1,1}),\cdots,(A_{m,n},B_{m,n}),(C_{1,1},$ $C_{1,2},C_{1,3}),\cdots,(C_{m,1},C_{m,2},C_{m,3}),(D_{1,1},D_{2,1},D_{1,3}),\cdots,(D_{n,1},D_{n,2},D_{n,3})\big)$ and $\mathcal{D}'=\{((A_{i,j},$\\ $ B_{i,j})_{i=1}^{m})_{j=1}^{n}\}$.
\medskip

$\mathcal{B}$ runs the extractor of the proof  $\prod_{U}$ to extract the knowledge $(i,j,r_{1},r_{2},r_{3},r_{4},r_{5},r_{6},r_{7},r_{8},$ $C_{i,1},C_{i,2},\Gamma_{1}^{i},\Gamma_{2}^{j},\Gamma_{1}^{i+l},\Gamma_{2}^{j+k})$. If $\mathcal{E}$ can distinguish {\bf Game}$_{1}$ and {\bf Game}$_{2}$, namely $i\notin\{1,2,\cdots,m\}$ or $i+l\notin\{1,2,\cdots,m\}$, $\mathcal{B}$ outputs a forged signature $I_{1}^{\frac{1}{r_{7}}}$ on $i$ or a forged signature $(I_{3})^{\frac{1}{r_{9}}}$ on $i+l$.
\medskip

\noindent{\em Case-II.} Given $(g,g^{\zeta},g^{\zeta^{2}},\cdots,g^{\zeta^{q}})$, $\mathcal{B}$ selects $g_{1}$, $g_{2}$  $h_{2}$,    $\mathfrak{g}$ and $\mathfrak{h}$  from $\mathbb{G}$, and sets  $f(\zeta)=(\zeta+1)(\zeta+2)\cdots (\zeta+n)=\sum_{z=0}^{n}c_{z}\zeta^{z}$, $h_{1}=g^{f(\zeta)}$,  and $f_{j}(\zeta)=\frac{f(\zeta)}{\zeta+i}=\sum_{w=1}^{n-1}d_{w}\zeta^{w}$ where $c_{z},d_{w}\in\mathbb{Z}_{p}$ and $j=1,2,\cdots,n$. $\mathcal{B}$ selects $\alpha_{1},\alpha_{2},\beta_{2},x,y\stackrel{\$}{\leftarrow}\mathbb{Z}_{p}$, and computes $H=e(\mathfrak{g},\mathfrak{h})$ and sets $\beta_{1}=\zeta$. $\mathcal{B}$ computes
\begin{equation*}
\begin{array}{c}
 W_{1}=g_{1}^{\alpha_{1}}, ~ W_{2}=g_{2}^{\alpha_{2}}, ~ W_{1}'=\prod_{z=0}^{n} (g^{\zeta^{z+1}})^{c_{z}}=(g^{\sum_{z=0}^{n}c_{z}\zeta^{z}})^{\zeta}=(g^{f(\zeta))\zeta}=h_{1}^{\zeta}, ~ W_{2}'=h_{2}^{\beta_{2}}, \\
  ~ \Gamma_{1}^{j}=g_{1}^{\frac{1}{\alpha_{1}+i}},~  \Gamma_{2}^{j}=\prod_{w=0}^{n-1}(g^{\zeta^{w}})^{d_{w}}=g^{\sum_{w=0}^{n-1}d_{w}\zeta^{w}}= g^{f_{j}(\zeta)}=g^{\frac{f(\zeta)}{\zeta+j}}=h_{1}^{\frac{1}{\beta_{1}+j}}, ~A_{i,j}=(g_{1}^{i}h_{1}^{j}g_{2}^{x^{i}}h_{2}^{y^{j}}),\\ 
 B_{i,j}=e(\mathfrak{h},A_{i,j})\cdot M_{i,j}, ~  \left(C_{i,1}=g_{2}^{x^{i}}, ~ C_{i,2}=g_{2}^{\frac{1}{\alpha_{2}+x^{i}}}, ~ C_{i,3}=e(\mathfrak{h},\mathfrak{g})^{x^{i}}\right), \\
 \left(D_{j,1}=h_{2}^{y^{j}}, ~ D_{j,2}=h_{2}^{\frac{1}{\beta_{2}+y^{j}}}, ~ D_{j,3}=e(\mathfrak{h},\mathfrak{g})^{y^{j}}\right) \mbox{ for}~  i=1,2,\cdots,m ~\mbox{and}~ j=1,2,\cdots,n.
\end{array}
 \end{equation*}
 
 The secret key is $SK=(\alpha_{1},\alpha_{2},\zeta,\beta_{2},x,y,\mathfrak{h})$ and  the public parameters  are 
$PP=\big(e,p,\mathbb{G},\mathbb{G}_{\tau},$ $\mathfrak{g},g_{1},g_{2},h_{1},h_{2},H,W_{1},W_{2},W'_{1},W'_{2},\Gamma_{1}^{1},\cdots,\Gamma_{1}^{m},\Gamma_{2}^{1},\cdots,\Gamma_{2}^{n},((A_{1,1},B_{1,1}),\cdots,(A_{m,n},B_{m,n}),(C_{1,1},$ $C_{1,2},C_{1,3}),\cdots,(C_{m,1},C_{m,2},C_{m,3}),(D_{1,1},D_{2,1},D_{1,3}),\cdots,(D_{n,1},D_{n,2},D_{n,3})\big)$ and $\mathcal{D}'=\{((A_{i,j},$\\ $ B_{i,j})_{i=1}^{m})_{j=1}^{n}\}$.
\medskip

$\mathcal{B}$ runs the extractor of the proof  $\prod_{U}$ to extract the knowledge $(i,j,r_{1},r_{2},r_{3},r_{4},r_{5},r_{6},r_{7},r_{8},r_{9},$ $r_{10},C_{i,1},C_{i,2},D_{j,1},D_{j,2},\Gamma_{1}^{i},\Gamma_{2}^{j},\Gamma_{1}^{i+l},\Gamma_{2}^{j+k})$. If $\mathcal{E}$ can distinguish {\bf Game}$_{1}$ and {\bf Game}$_{2}$, namely $j\notin\{1,2,\cdots,n\}$ or $j+k\notin\{1,2,\cdots,m\}$, $\mathcal{B}$ outputs a forged signature $I_{2}^{\frac{1}{r_{8}}}$ on $j$ or a forged signature $(I_{4})^{\frac{1}{r_{10}}}$ on $j+k$.
\medskip

Therefore, 

\begin{equation*}
\left|{\bf Hybrid}_{\mathcal{E},Sim_{2}}-{\bf Hybrid}_{\mathcal{E},Sim_{1}}\right|\leq 2 Adv_{\mathcal{A}}^{q-SDH}.
\end{equation*}
\qed
\medskip

\noindent{\bf Proof of Claim \ref{clm:2}.} We prove this claim by constructing an algorithm $\mathcal{B}$ that can break the  $q$-PDDH assumption.

Suppose that there exists an environment $\mathcal{E}$ that can distinguish {\bf Game}$_{3}$ and {\bf Game}$_{4}$, $\mathcal{B}$ can break the $q$-PDDH as follows.

Given $(g,g^{\zeta},g^{\zeta^{2}},\cdots,g^{\zeta^{q}},H,T_{1},T_{2},\cdots,T_{q})$, $\mathcal{B}$ will determine whether $T_{z}=H^{x^{z}}$ or   $T_{z}\stackrel{R}{\leftarrow}\mathbb{Z}_{p}$ for $z=1,2,\cdots,q$. Let $f(x)=(\alpha_{2}+x)(\alpha_{2}+x^{2})\cdots(\alpha_{2}+x^{m})=\sum_{z=0}^{\frac{m(1+m)}{2}}a_{z}x^{z}$, $f_{i}(x)=(\alpha_{2}+x)(\alpha_{2}+x^{2})\cdots(\alpha_{2}+x^{i-1})(\alpha_{2}+x^{i+1})\cdots(\alpha_{2}+x^{m})=\sum_{w=0}^{\frac{m(1+m)}{2}-i}b_{w}x^{w}$, $f'(y)=(\beta_{2}+y)(\beta_{2}+y^{2})\cdots(\beta_{2}+y^{n})=\sum_{\rho=0}^{\frac{n(1+n)}{2}}c_{\rho}y^{\rho}$ and $f'_{i}(y)=(\beta_{2}+y)(\beta_{2}+y^{2})\cdots(\beta_{2}+y^{j-1})(\beta_{2}+y^{j+1})\cdots(\beta_{2}+y^{n})=\sum_{\varrho=0}^{\frac{n(1+n)}{2}-j}d_{\varrho}y^{\varrho}$. 
$\mathcal{B}$ selects $\gamma_{1},\gamma_{2}\stackrel{R}{\leftarrow}\mathbb{Z}_{p}$, and sets $\mathfrak{g}=g$, $g_{1}=g^{\gamma_{1}}$, $g_{2}=g^{f(x)}$, $h_{1}=g^{\gamma_{2}}$ and $h_{2}=g^{f'(y)}$. $\mathcal{B}$ selects $\alpha_{1},\alpha_{2},\beta_{1},\beta_{2},\gamma\stackrel{R}{\leftarrow}\mathbb{Z}_{p}$, and sets $y=\gamma x$.        $\mathcal{B}$ computes 
\begin{equation*}
\begin{array}{c}
W_{1}=g_{1}^{\alpha_{1}}, ~W_{2}=g_{2}^{\alpha_{2}}, ~ W'_{1}=h_{1}^{\beta_{1}},~W'_{2}=h_{2}^{\beta_{2}},~
\Gamma_{1}^{i}=g_{1}^{\frac{1}{\alpha_{1}+i}},~\Gamma_{2}^{j}=h_{1}^{\frac{1}{\beta_{1}+j}},\\
\end{array}
\end{equation*}
\begin{equation*}
\begin{array}{ll}
A_{i,j}&=g_{1}^{i}h_{1}^{j}\prod_{z=0}^{\frac{m(1+m)}{2}}(g^{x^{z+i}})^{a_{z}}\prod_{\rho=0}^{\frac{n(1+n)}{2}}(g^{x^{\rho+j}})^{\gamma^{\rho+j}c_{\rho}}\\
&=g_{1}^{i}h_{1}^{j}\prod_{z=0}^{\frac{m(1+m)}{2}}(g^{a_{z}x^{z}})^{x^{i}}\prod_{\rho=0}^{\frac{n(1+n)}{2}}(g^{c_{z}y^{\rho}})^{y_{j}}=g_{1}^{i}h_{1}^{j}g_{2}^{x^{i}}h_{2}^{y^{j}},\\
\end{array}
\end{equation*}
\begin{equation*}
\begin{array}{ll}
B_{i,j}=H^{\gamma_{1}i+\gamma_{2}j}\cdot \prod_{z=0}^{\frac{m(1+m)}{2}}T_{z+i}^{a_{z}}\prod_{\rho=0}^{\frac{n(1+n)}{2}}T_{\rho+j}^{c_{\rho}}\cdot M_{i,j},\\
C_{i,1}=\prod_{z=0}^{\frac{m(1+m)}{2}}(g^{x^{z+i}})^{a_{z}}=\prod_{z=0}^{m}(g^{a_{z}x^{z}})^{x^{i}}=g_{2}^{x^{i}},\\
C_{i,2}=\prod_{w=0}^{\frac{m(m+1)}{2}-i}(g^{x^{w}})^{b_{w}}=\prod_{w=0}^{\frac{m(m+1)}{2}-i}(g^{b_{w}x^{w}})=g^{f_{i}(x)}
=g^{\frac{f(x)}{\alpha_{2}+x^{i}}}=g_{2}^{\frac{1}{\alpha_{2}+x^{i}}},\\
C_{i,3}=T_{i},\\
D_{j,1}=\prod_{\rho=0}^{\frac{n(1+n)}{2}}(g^{x^{\rho+j}})^{\gamma^{\rho+j}c_{\rho}}=\prod_{\rho=0}^{\frac{n(1+n)}{2}}(g^{(\gamma x)^{\rho+j}})^{c_{\rho}}=\prod_{\rho=0}^{\frac{n(1+n)}{2}}(g^{c_{\rho}y^{\rho}})^{y^{j}}=h_{2}^{y^{j}},\\
D_{j,2}=\prod_{\varrho=0}^{\frac{n(1+n)}{2}}(g^{x^{\varrho}})^{d_{\varrho}\gamma^{\varrho}}=\prod_{\varrho=0}^{\frac{n(1+n)}{2}}(g^{(\gamma x)^{\varrho}})^{d_{\varrho}}=\prod_{\varrho=0}^{\frac{n(1+n)}{2}}(g^{d_{\varrho}y^{\varrho}})=h_{2}^{\frac{1}{\beta_{2}+y^{j}}},\\
D_{j,3}=T_{j}^{\gamma^{j}},~
\mbox{for}~ i=1,2,\cdots,m ~\mbox{and}~ j=1,2,\cdots,n.
\end{array}
\end{equation*}

The secret key is $SK=(\alpha_{1},\alpha_{2},\beta_{1},\beta_{2},x,y)$ and the public parameters $PP=\big(e,p,\mathbb{G},\mathbb{G}_{\tau},$ $\mathfrak{g},g_{1},g_{2},h_{1},h_{2},H,W_{1},W_{2},W'_{1},W'_{2},\Gamma_{1}^{1},\cdots,\Gamma_{1}^{m},\Gamma_{2}^{1},\cdots,\Gamma_{2}^{n},((A_{1,1},B_{1,1}),\cdots,(A_{m,n},B_{m,n}),(C_{1,1},$ $C_{1,2},C_{1,3}),\cdots,(C_{m,1},C_{m,2},C_{m,3}),(D_{1,1},D_{2,1},D_{1,3}),\cdots,(D_{n,1},D_{n,2},D_{n,3})\big)$ and $\mathcal{D}'=\{((A_{i,j},$\\ $ B_{i,j})_{i=1}^{m})_{j=1}^{n}\}$. $\mathcal{B}$ sends $PP$ to $\mathcal{E}$.

If $(T_{1},T_{2},\cdots,T_{q})=(H^{x},H^{x^{2}},\cdots,H^{x^{q}})$, the parameters are distributed exactly as in {\bf Game}$_{3}$. If $(T_{1},T_{2},\cdots,T_{q})\stackrel{R}{\leftarrow}\mathbb{G}_{\tau}^{q}$, the parameters are distributed exactly as in {\bf Game}$_{4}$. Hence, $\mathcal{B}$ can break the $q$-PDDH assumption if $\mathcal{E}$ can distinguish {\bf Game}$_{3}$ from {\bf Game}$_{4}$. Therefore, we have
\begin{equation*}
\left|{\bf Hybrid}_{\mathcal{E},Sim_{3}}-{\bf Hybrid}_{\mathcal{E},Sim_{4}}\right|\leq Adv_{\mathcal{A}}^{q-PDDH}.
\end{equation*}

\section{Conclusion and Future Work}\label{sec:conc}
In this paper, we proposed an OLBSQ scheme which does not require a semi-TTP. Especially, in our OLBSQ scheme, both the computation cost and communication cost to generate a query is constant, instead of linear with the  size of the queried area. We formalised the definition and security model of our OLBSQ scheme, and presented a concrete construction. 
%Furthermore, we implement our scheme and evaluate its efficiency. 
Finally, we reduced the security of the proposed OLBSQ scheme to well-known complexity assumptions.  

Our OLBSQ scheme was constructed on the groups equipped with pairing. Comparatively, pairing is time consuming operation. Therefore, constructing  OLBSQ schemes without pairing is interesting and desirable. We leave it as an open problem and our future work. 

\bibliographystyle{plain}
\bibliography{references}

\appendix
\section{Correctness}
\noindent{\em Correctness.} Our scheme described in Fig. \ref{fig:setup} and Fig. \ref{fig:service_tr} is correct because the following equations hold.
\begin{equation*}
\begin{split}
F_{1}=\mathfrak{g}^{r_{3}}C_{i,1}=\mathfrak{g}^{r_{3}}g_{2}^{x^{i}},~
F_{2}=C_{i,2}^{r_{4}}=g_{2}^{\frac{r_{4}}{\alpha_{2}+x^{i}}},~
J_{1}=\mathfrak{g}^{r_{5}}D_{j,1}=\mathfrak{g}^{r_{5}}h_{2}^{y^{j}},~
J_{2}=D_{j,2}^{r_{6}}=h_{2}^{\frac{r_{6}}{\beta_{2}+y^{j}}},\\
%\end{equation*}
%\begin{equation*}
I_{1}=(\Gamma_{1}^{i})^{r_{7}}=g_{1}^{\frac{r_{7}}{\alpha_{1}+i}},~
I_{2}=(\Gamma_{2}^{j})^{r_{8}}=h_{1}^{\frac{r_{8}}{\beta_{1}+j}},~
I_{3}=(\Gamma_{1}^{i+l})^{r_{9}}=g_{1}^{\frac{r_{9}}{\alpha_{1}+i+l}},~
I_{4}=(\Gamma_{2}^{j+k})^{r_{10}}=h_{1}^{\frac{r_{10}}{\beta_{1}+j+k}},
\end{split}
\end{equation*}

\begin{equation*}
\begin{split}
e(I_{1},W_{1}^{-1})=e(g_{1}^{\frac{r_{7}}{\alpha_{1}+i}},g_{1}^{-\alpha_{1}})=e(g_{1}^{\frac{-r_{7}(\alpha_{1}+i)+ir_{7}}{\alpha_{1}+i}},g_{1})=e(g_{1},g_{1})^{-r_{7}}\cdot e(g_{1},I_{1})^{i},\\
%\end{equation}
%\begin{equation}
e(I_{2},(W_{1}')^{-1})=e(h_{1}^{\frac{r_{8}}{\beta_{1}+j}},h_{1}^{-\beta_{1}})=e(h_{1}^{\frac{-r_{8}(\beta_{1}+i)+jr_{8}}{\beta_{1}+j}},h_{1})=e(h_{1},h_{1})^{-r_{8}}\cdot e(h_{1},I_{2})^{j},\\
%\end{equation*}
%\begin{equation}
e(I_{3},W_{1}^{-1})\times e(g_{1},I_{3})^{-l}=e(g_{1}^{\frac{r_{9}}{\alpha_{1}+i+l}},g_{1}^{-\alpha_{1}})\times e(g_{1},I_{3})^{-l}=e(g_{1}^{\frac{-r_{9}(\alpha_{1}+i+l)+r_{9}(i+l)}{\alpha_{1}+i+l}},g_{1})\times e(g_{1},I_{3})^{-l}\\=e(g_{1},g_{1})^{-r_{9}}\cdot e(g_{1},I_{3})^{(i+l)}\times e(g_{1},I_{3})^{-l}=e(g_{1},g_{1})^{-r_{9}}\cdot e(g_{1},I_{3})^{i},\\
e(I_{3},W_{1}^{-1})\times e(g_{1},I_{3})^{-l)}=e(g_{1},g_{1})^{-r_{9}}\cdot e(g_{1},I_{3})^{i},\\
%\end{equation}
%\begin{equation}
e(I_{4},(W'_{1})^{-1})\times e(h_{1},I_{4})^{-k}=e(h_{1}^{\frac{r_{10}}{\beta_{1}+j+k}},h_{1}^{-\beta_{1}})\times e(h_{1},I_{4})^{-k}=e(h_{1}^{\frac{-r_{10}(\beta_{1}+j+k)+r_{10}(j+k)}{\beta_{1}+j+k}},h_{1})\times e(h_{1},I_{4})^{-k}\\=e(h_{1},h_{1})^{-r_{10}}\cdot e(h_{1},I_{4})^{(j+k)}\times e(h_{1},I_{4})^{-k}=e(h_{1},h_{1})^{-r_{10}}\cdot e(h_{1},I_{4})^{j},
\end{split}
\end{equation*}

\begin{equation*}
\begin{split}
e(F_{1}W_{2},F_{2})=e(\mathfrak{g}^{r_{3}}g_{2}^{x^{i}}g_{2}^{\alpha_{2}},g_{2}^{\frac{r_{4}}{\alpha_{2}+x^{i}}})=e(\mathfrak{g}^{r_{3}}g_{2}^{\alpha_{2}+x^{i}},g_{2}^{\frac{r_{4}}{\alpha_{2}+x^{i}}})=e(\mathfrak{g},F_{2})^{r_{3}}\cdot e(g_{2},g_{2})^{r_{4}},\\
%\end{equation}
%\begin{equation}
e(J_{1}W'_{2},J_{2})=e(\mathfrak{g}^{r_{5}}h_{2}^{y^{j}}h_{2}^{\beta_{2}},h_{2}^{\frac{r_{6}}{\beta_{2}+y^{j}}})=e(\mathfrak{g}^{r_{5}}h_{2}^{\beta_{2}+y^{j}},h_{2}^{\frac{r_{6}}{\beta_{2}+y^{j}}})=e(\mathfrak{g},J_{2})^{r_{5}}\cdot e(h_{2},h_{2})^{r_{6}},\\
%\end{equation}
%\begin{equation}
e(E_{1}W_{1},I_{1})=e(\mathfrak{g}^{r_{1}}g_{1}^{i}g_{1}^{\alpha_{1}},g_{1}^{\frac{r_{7}}{\alpha_{1}+i}})=e(\mathfrak{g}^{r_{1}}g_{1}^{\alpha_{1}+i},g_{1}^{\frac{r_{7}}{\alpha_{1}+i}})=e(\mathfrak{g},I_{1})^{r_{1}}\cdot e(g_{1},g_{1})^{r_{7}},\\
%\end{equation}
%\begin{equation}
e(E_{2}W'_{1},I_{2})=e(\mathfrak{g}^{r_{2}}h_{1}^{j}h_{1}^{\beta_{1}},h_{1}^{\frac{r_{8}}{\beta_{1}+j}})=e(\mathfrak{g}^{r_{2}}h_{1}^{\beta_{1}+j},h_{1}^{\frac{r_{8}}{\beta_{1}+j}})=e(\mathfrak{g},I_{2})^{r_{2}}\cdot e(h_{1},h_{1})^{r_{8}},\\
%\end{equation}
%\begin{equation}
e(E_{1}g_{1}^{l}W_{1},I_{3})=e(\mathfrak{g}^{r_{1}}g_{1}^{i}g_{1}^{l}g_{1}^{\beta_{1}},g_{1}^{\frac{r_{9}}{\alpha_{1}+i+l}})=e(\mathfrak{g}^{r_{2}}h_{1}^{\alpha_{1}+i+l},g_{1}^{\frac{r_{9}}{\beta_{1}+j+k}})=e(\mathfrak{g},I_{3})^{r_{1}}\cdot e(g_{1},g_{1})^{r_{9}},\\
%\end{equation}
%\begin{equation}
e(E_{2}h_{1}^{k}W'_{1},I_{4})=e(\mathfrak{g}^{r_{2}}h_{1}^{j}h_{1}^{k}h_{1}^{\beta_{1}},h_{1}^{\frac{r_{10}}{\beta_{1}+j+k}})=e(\mathfrak{g}^{r_{2}}h_{1}^{\beta_{1}+j+k},h_{1}^{\frac{r_{10}}{\beta_{1}+j+k}})=e(\mathfrak{g},I_{4})^{r_{2}}\cdot e(h_{1},h_{1})^{r_{10}},\\
%\end{equation}
%\begin{equation}
\frac{e(C_{\mu,2},W_{2})}{e(g_{2},g_{2})}=\frac{e(g_{2}^{\frac{1}{\alpha_{2}+x^{\mu}}},g_{2}^{\alpha_{2}})}{e(g_{2},g_{2})}=\frac{e(g_{2}^{\frac{(\alpha_{2}+x^{\mu})-x^{\mu}}{\alpha_{2}+x^{\mu}}},g_{2})}{e(g_{2},g_{2})}=\frac{e(g_{2},g_{2})\cdot e(C_{\mu,2},g_{2})^{-x^{\mu}}}{e(g_{2},g_{2})}=e(C_{\mu,2},g_{2})^{-x^{\mu}},\\
%\end{equation*}
%\begin{equation}
\frac{e(D_{\nu,2},W'_{2})}{e(h_{2},h_{2})}=\frac{e(h_{2}^{\frac{1}{\beta_{2}+y^{\nu}}},h_{2}^{\beta_{2}})}{e(h_{2},h_{2})}=\frac{e(h_{2}^{\frac{(\beta_{2}+y^{\nu})-y^{\mu}}{\beta_{2}+y^{\nu}}},h_{2})}{e(h_{2},h_{2})}=\frac{e(h_{2},h_{2})\cdot e(D_{\nu,2},h_{2})^{-y^{\nu}}}{e(g_{2},g_{2})}=e(D_{\nu,2},h_{2})^{-y^{\nu}},
\end{split}
\end{equation*}

\begin{equation*}
\begin{split}
K_{\mu,\nu}=E_{1}g_{1}^{\mu}E_{2}h_{1}^{\nu}F_{1}^{x^{\mu}}J_{1}^{y^{\nu}}=\mathfrak{g}^{r_{1}}g_{1}^{i}g_{1}^{\mu}\mathfrak{g}^{r_{2}}h_{1}^{j}h_{1}^{\nu}(\mathfrak{g}^{r_{3}}g_{2}^{x^{i}})^{x^{\mu}}(\mathfrak{g}^{r_{5}}h_{2}^{y^{j}})^{y^{\nu}}\\=\mathfrak{g}^{-(r_{1}+r_{2})+r_{3}x^{\mu}+r_{5}y^{\nu}}g_{1}^{i+\mu}h_{1}^{j+\nu}g_{2}^{x^{i+\mu}}h_{2}^{y^{j+\nu}},\\
%\end{split}
%\end{equation}
%\begin{equation}
L_{\mu,\nu}=e(K_{\mu,\nu},\mathfrak{h})=e(\mathfrak{g},\mathfrak{h})^{-(r_{1}+r_{2})}\cdot e(\mathfrak{g},\mathfrak{h})^{r_{3}x^{\mu}}\cdot e(\mathfrak{g},\mathfrak{h})^{r_{5}y^{\nu}}\cdot e(A_{i+\mu,j+\nu},\mathfrak{h}),\\
%\end{equation}
%\begin{equation}
P_{\mu,\nu}=\frac{L_{\mu,\nu}}{H^{-(r_{1}+r_{2})}\cdot C_{\mu,3}^{r_{3}}\cdot D_{\nu,3}^{r_{5}}}\\=\frac{e(\mathfrak{g},\mathfrak{h})^{-(r_{1}+r_{2})}\cdot e(\mathfrak{g},\mathfrak{h})^{r_{3}x^{\mu}}\cdot e(\mathfrak{g},\mathfrak{h})^{r_{5}y^{\nu}}\cdot e(A_{i+\mu,j+\nu},\mathfrak{h})}{e(\mathfrak{g},\mathfrak{h})^{-(r_{1}+r_{2})}\cdot e(\mathfrak{g},\mathfrak{h})^{r_{3}x^{\mu}}\cdot e(\mathfrak{g},\mathfrak{h})^{r_{5}y^{\nu}}}=e(A_{i+\mu,j+\nu},\mathfrak{h}),\\
%\end{equation}
%\begin{equation}
\frac{B_{i+\mu,j+\nu}}{P_{\mu,\nu}}=\frac{e(A_{i+\mu,j+\nu},\mathfrak{h})\cdot M_{i+\mu,j+\nu}}{e(A_{i+\mu,j+\nu},\mathfrak{h})}=M_{i+\mu,j+\nu}.
\end{split}
\end{equation*}

\section*{Details of Zero-Knowledge Proofs}
Let $\mathcal{H}:\{0,1\}^{*}\rightarrow\mathbb{Z}_{p}$ be a cryptographic hash function.
\medskip

\noindent{\em An Instance of Zero Knowledge Proof $\prod_{SP}^{1}$.}

\begin{enumerate}
\item $\mathcal{SP}$ selects $\mathfrak{h}'\stackrel{R}{\leftarrow}\mathbb{G}$ and $M_{SP}^{1}\stackrel{R}{\leftarrow}\{0,1\}^{*}$, and computes $H'=e(\mathfrak{g},\mathfrak{h}')$, $c=\mathcal{H}(H||H'||M_{SP}^{1})$ and $\hat{\mathfrak{h}}=\mathfrak{h}'\mathfrak{h}^{-c}$. $SP$ sends $(H,H',c, \hat{\mathfrak{h}},M_{SP}^{1})$ to $\mathcal{U}$.

\item $\mathcal{U}$ checks $c\stackrel{?}{=}\mathcal{H}(H||H'||M_{SP}^{1})$ and $H'\stackrel{?}{=}e(\mathfrak{g},\hat{\mathfrak{h}})\cdot H^{c}$.
\end{enumerate}
\medskip

\noindent{\em An Instance of Zero Knowledge Proof $\prod_{U}$.}

\begin{enumerate}
\item $\mathcal{U}$ selects $r,s,r_{1},r_{2},r_{3},r_{4},r_{5},r_{6},r_{7},r_{8},r_{9},r_{10},s_{1},s_{2},s_{3},s_{4},s_{5},s_{6},s_{7},s_{8},s_{9},s_{10}\stackrel{R}{\leftarrow}\mathbb{Z}_{p}$, $M_{U}\stackrel{R}{\leftarrow}\{0,1\}^{*}$, and computes
$E_{1}=\mathfrak{g}^{-r_{1}}g_{1}^{i}$, $E_{2}=\mathfrak{g}^{-r_{2}}h_{1}^{j}$,
 $F_{1}=\mathfrak{g}^{r_{3}}C_{i,1}$,  $F_{2}=C_{i,2}^{r_{4}}$,
 $J_{1}=\mathfrak{g}^{r_{5}}D_{j,1}$, $J_{2}=D_{j,2}^{r_{6}}$,
   $I_{1}=(\Gamma_{1}^{i})^{r_{7}}$, $I_{2}=(\Gamma_{2}^{j})^{r_{8}}$,  
   $I_{3}=(\Gamma_{1}^{i+l})^{r_{9}}$, $I_{4}=(\Gamma_{2}^{j+k})^{r_{10}}$,
$E'_{1}=\mathfrak{g}^{s_{1}}g_{1}^{r}$, $E'_{2}=\mathfrak{g}^{s_{2}}h_{1}^{s}$,
$\Theta_{1}=e(\mathfrak{g},F_{2})^{s_{3}}$,  $\Theta_{2}=e(g_{2},g_{2})^{s_{4}}$,
$\Theta_{3}=e(\mathfrak{g},J_{2})^{s_{5}}$, $\Theta_{4}=e(h_{2},h_{2})^{s_{6}}$,
   $\Theta_{5}=e(g_{1},I_{1})^{r}$, $\Theta_{6}=e(g_{1},g_{1})^{s_{7}}$,  
   $\Theta_{7}=e(h_{1},I_{2})^{s}$, $\Theta_{8}=e(h_{1},h_{1})^{s_{8}}$,
   $\Theta_{9}=e(g_{1},g_{1})^{s_{9}}$, $\Theta_{10}=e(h_{1},h_{1})^{s_{10}}$, 
   $\Theta_{11}=e(\mathfrak{g},I_{1})^{s_{1}}$, $\Theta_{12}=e(\frak{g},I_{2})^{s_{2}}$, $\Theta_{13}=e(\frak{g},I_{3})^{s_{1}}$, $\Theta_{14}=e(\frak{g},I_{4})^{s_{2}}$, $\Theta_{15}=e(g_{1},I_{3})^{r}$, $\Theta_{16}=e(h_{1},I_{4})^{s}$,\\

$c_{1}=\mathcal{H}(E_{1}||E_{1}'||M_{U})$, $c_{2}=\mathcal{H}(E_{2}||E'_{2}||M_{U})$, $c_{3}=\mathcal{H}(\Theta_{5}||\Theta_{6}||M_{U})$, $c_{4}=\mathcal{H}(\Theta_{7}||\Theta_{8}||M_{U})$, $c_{5}=\mathcal{H}(\Theta_{9}||\Theta_{15}||M_{U})$, $c_{6}=\mathcal{H}(\Theta_{10}||\Theta_{16}||M_{U})$, $c_{7}=\mathcal{H}(\Theta_{1}||\Theta_{2}||M_{U})$, $c_{8}=\mathcal{H}(\Theta_{3}||\Theta_{4}||M_{U})$, $c_{9}=\mathcal{H}(\Theta_{6}||\Theta_{11}||M_{U})$, $c_{10}=\mathcal{H}(\Theta_{8}||\Theta_{12}||M_{U})$, $c_{11}=\mathcal{H}(\Theta_{9}||\Theta_{13}||M_{U})$, $c_{12}=\mathcal{H}(\Theta_{10}||\Theta_{14}||M_{U})$, \\

$z_{1}=s_{1}+c_{1}r_{1}$, $z_{2}=r-c_{1}i$,  $z_{3}=s_{2}+c_{2}r_{2}$, $s_{4}=s-c_{2}j$, $z_{5}=s_{7}+c_{3}r_{7}$, $z_{6}=r-c_{3}i$, $z_{7}=s-c_{4}j$, $z_{8}=s_{8}+c_{4}r_{8}$, $z_{9}=s_{9}+c_{5}r_{9}$, $z_{10}=r-c_{5}i$, $z_{11}=s_{10}+c_{6}r_{10}$, $z_{12}=s-c_{6}j$, $z_{13}=s_{3}-c_{7}r_{3}$, $z_{14}=s_{4}-c_{7}r_{4}$, $z_{15}=s_{5}-c_{8}r_{5}$, $z_{16}=s_{6}-c_{8}r_{6}$, $z_{17}=s_{7}-c_{9}r_{7}$, $z_{18}=s_{1}+c_{9}r_{1}$, $z_{19}=s_{8}-c_{10}r_{8}$, $z_{20}=s_{2}+c_{10}r_{2}$, $z_{21}=s_{9}-c_{11}r_{9}$, $z_{22}=s_{1}+c_{11}r_{1}$, $z_{23}=s_{10}-c_{12}r_{10}$, $z_{24}=s_{2}+c_{12}r_{2}$\\

$\mathcal{U}$ sends $(E_{1},E_{2},F_{1},F_{2},J_{1},J_{2},I_{1},I_{2},I_{3},I_{4},E'_{1},E'_{2},\Theta_{1},\Theta_{2},\Theta_{3},\Theta_{4},\Theta_{5},\Theta_{6},\Theta_{7},\Theta_{8},\Theta_{9},\Theta_{10},\Theta_{11},$ $\Theta_{12},\Theta_{13},\Theta_{14},$ $\Theta_{15}, \Theta_{16},c_{1},c_{2},c_{3},c_{4},c_{5},c_{6},c_{7},c_{8},c_{9},c_{10},c_{11},c_{12},z_{1},z_{2},
z_{3},z_{4},z_{5},z_{6},z_{7},z_{8},z_{9},z_{10},$ $z_{11},z_{12}, z_{13},z_{14},z_{15},z_{16},z_{17},$ $z_{18},z_{19},z_{20},z_{21},z_{22},z_{23},z_{24},M_{U})$ to $\mathcal{SP}$.
\medskip

\item $\mathcal{SP}$ checks
$c_{1}\stackrel{?}{=}\mathcal{H}(E_{1}||E_{1}'||M_{U})$, $c_{2}\stackrel{?}{=}\mathcal{H}(E_{2}||E'_{2}||M_{U})$, $c_{3}\stackrel{?}{=}\mathcal{H}(\Theta_{5}||\Theta_{6}||M_{U})$, $c_{4}\stackrel{?}{=}\mathcal{H}(\Theta_{7}||\Theta_{8}||M_{U})$, $c_{5}\stackrel{?}{=}\mathcal{H}(\Theta_{9}||\Theta_{15}||M_{U})$, $c_{6}\stackrel{?}{=}\mathcal{H}(\Theta_{10}||\Theta_{16}||M_{U})$, $c_{7}\stackrel{?}{=}\mathcal{H}(\Theta_{1}||\Theta_{2}||M_{U})$, $c_{8}\stackrel{?}{=}\mathcal{H}(\Theta_{3}||\Theta_{4}||M_{U})$, $c_{9}\stackrel{?}{=}\mathcal{H}(\Theta_{6}||\Theta_{11}||M_{U})$, $c_{10}\stackrel{?}{=}\mathcal{H}(\Theta_{8}||\Theta_{12}||M_{U})$, $c_{11}\stackrel{?}{=}\mathcal{H}(\Theta_{9}||\Theta_{13}||M_{U})$, $c_{12}\stackrel{?}{=}\mathcal{H}(\Theta_{10}||\Theta_{14}||M_{U})$, \\

 $E'_{1}\stackrel{?}{=}\mathfrak{g}^{z_{1}}g_{1}^{z_{2}}E_{1}^{c_{1}}$, $E'_{2}\stackrel{?}{=}\mathfrak{g}^{z_{2}}h_{1}^{z_{4}}E_{2}^{c_{2}}$, $\Theta_{5}\Theta_{6}\stackrel{?}{=}e(g_{1},I_{1})^{z_{6}}\cdot e(g_{1},g_{1})^{z_{5}}\cdot e(W_{1}^{-1},I_{1})^{c_{3}}$, $\Theta_{7}\Theta_{8}\stackrel{?}{=}e(h_{1},I_{2})^{z_{7}}\cdot e(h_{1},h_{1})^{z_{8}}\cdot e((W'_{1})^{-1},I_{2})^{c_{4}}$, $\Theta_{9}\Theta_{15}\stackrel{?}{=}e(g_{1},g_{1})^{z_{9}}\cdot e(g_{1},I_{3})^{z_{10}}\cdot(e(I_{3},W_{1}^{-1})\cdot e(g_{1},I_{3})^{-l})^{c_{5}}$,
 $\Theta_{10}\Theta_{16}\stackrel{?}{=}e(h_{1},h_{1})^{z_{11}}\cdot e(h_{1},I_{4})^{z_{12}}\cdot(e(I_{4},(W'_{1})^{-1})\cdot e(h_{1},I_{4})^{-k})^{c_{6}}$, $\Theta_{1}\Theta_{2}\stackrel{?}{=}e(\mathfrak{g},F_{2})^{z_{13}}\cdot e(g_{2},g_{2})^{z_{14}}\cdot e(F_{1}W_{2},F_{2})^{c_{7}}$,
 $\Theta_{3}\Theta_{4}\stackrel{?}{=}e(\mathfrak{g},J_{2})^{z_{15}}\cdot e(h_{2},h_{2})^{z_{16}}\cdot e(J_{1}W'_{2},J_{2})^{c_{8}}$, $\Theta_{6}\Theta_{11}\stackrel{?}{=} e(g_{1},g_{1})^{z_{17}}\cdot e(\mathfrak{g},I_{1})^{z_{18}} \cdot (e(E_{1}W_{1},I_{1}))^{c_{9}}$,  $\Theta_{8}\Theta_{12}\stackrel{?}{=}e(h_{1},h_{1})^{z_{19}}\cdot e(\frak{g},I_{2})^{z_{20}}\cdot  e(E_{2}W'_{1},I_{2})^{c_{10}}$, 
 $\Theta_{9}\Theta_{13}\stackrel{?}{=}e(g_{1},g_{1})^{z_{21}}\cdot e(\frak{g},I_{3})^{z_{22}}\cdot (e(E_{1}g_{1}^{l}W_{1},I_{3}))^{c_{11}}$, 
 $\Theta_{10}\Theta_{14}\stackrel{?}{=}e(h_{1},h_{1})^{z_{23}}\cdot e(\frak{g},I_{4})^{z_{24}}\cdot  (e(E_{2}h_{1}^{k}W'_{1},I_{4}))^{c_{12}}$,

 %$\Theta_{5}\Theta_{9}\stackrel{?}{=}e(g_{1},I_{1})^{z}\cdot e(g_{1},g_{1})^{z_{9}}\cdot (e(W_{1},I_{3})\cdot e(g_{1},I_{1})^{l})^{c_{1}}$, 
% $\Theta_{7}\Theta_{10}\stackrel{?}{=}e(h_{1},I_{2})^{z'}\cdot e(h_{1},h_{1})^{z_{10}}\cdot (e(W'_{1},I_{4})\cdot e(h_{1},I_{4}))^{c_{2}}$, 

%
 
\end{enumerate}
\medskip

\noindent{\em An Instance of Zero Knowledge Proof $\prod_{SP}^{2}$.}

\begin{enumerate}
\item For $\mu\in\{1,2,\cdots,l\}$ and $\nu\in\{1,2,\cdots,k\}$, $\mathcal{SP}$ selects $\omega_{\mu},\psi_{\nu}\stackrel{R}{\leftarrow}\mathbb{Z}_{p}$, $\tilde{\mathfrak{h}}\stackrel{R}{\leftarrow}\mathbb{G}$ and $M_{SP}^{2}\stackrel{R}{\leftarrow}\{0,1\}^{*}$, and computes $K_{\mu,\nu}=E_{1}g_{1}^{\mu}E_{2}h_{1}^{\nu}F_{1}^{x^{\mu}}J_{1}^{y^{\nu}}$, $L_{\mu,\nu}=e(K_{\mu,\nu},\mathfrak{h})$, $\Upsilon_{\mu,\nu}^{1}=F_{1}^{\omega_{\mu}}J_{1}^{\psi_{\nu}}$, $\Upsilon_{\mu,\nu}^{2}=e(C_{\mu,2},g_{2})^{-\omega_{\mu}}$, $\Upsilon_{\mu,\nu}^{3}=e(D_{\nu,2},h_{2})^{-\psi_{\nu}}$, \\
 $c_{\mu,\nu}^{1}=\mathcal{H}(K_{\mu,\nu}||\Upsilon_{\mu,\nu}^{1}||\Upsilon_{\mu,\nu}^{2}||\Upsilon_{\mu,\nu}^{3}||M_{SP}^{2})$, $c_{\mu,\nu}^{2}=\mathcal{H}(L_{\mu,\nu}||H||M_{SP}^{2})$, 
$z_{\mu,\nu}^{1}=\omega_{\mu}-c_{\mu,\nu}^{1}x^{\mu}$,   $z_{\mu,\nu}^{2}=\psi_{\nu}-c_{\mu,\nu}^{1}y^{\nu}$, $\mathfrak{h}_{\mu,\nu}=\tilde{\mathfrak{h}}\mathfrak{h}^{-c_{\mu,\nu}^{2}}$, $\tilde{H}=e(\mathfrak{g},\tilde{\mathfrak{h}})$, $L'_{\mu,\nu}=e(K_{\mu,\nu},\tilde{\mathfrak{h}})$.

$\mathcal{SP}$ sends $(K_{\mu,\nu},L_{\mu,\nu},\Upsilon_{\mu,\nu}^{1},\Upsilon_{\mu,\nu}^{2},\Upsilon_{\mu,\nu}^{3},c_{\mu,\nu}^{1},c_{\mu,\nu}^{2},z_{\mu,\nu}^{1},z_{\mu,\nu}^{2},\mathfrak{h}_{\mu,\nu},\tilde{H},L'_{\mu,\nu})$ to $\mathcal{U}$.
\medskip

\item $\mathcal{U}$ checks  $c_{\mu,\nu}^{1}\stackrel{?}{=}\mathcal{H}(K_{\mu,\nu}||\Upsilon_{\mu,\nu}^{1}||\Upsilon_{\mu,\nu}^{2}||\Upsilon_{\mu,\nu}^{3}||M_{SP}^{2})$, $c_{\mu,\nu}^{2}\stackrel{?}{=}\mathcal{H}(L_{\mu,\nu}||H||M_{SP}^{2})$, $\Upsilon_{\mu,\nu}^{1}\stackrel{?}{=}F_{1}^{z_{\mu,\nu}^{1}}J_{1}^{z_{\mu,\nu}^{2}}(\frac{K_{\mu,\nu}}{E_{1}g_{1}^{\mu}E_{2}h_{1}^{\nu}})^{c_{\mu,\nu}^{1}}$, $\Upsilon_{\mu,\nu}^{2}\stackrel{?}{=}e(C_{\mu,2},g_{2})^{-z_{\mu,\nu}^{1}}(\frac{e(C_{\mu,2},W_{2})}{e(g_{2},g_{2})})^{c_{\mu,\nu}^{1}}$,  $\Upsilon_{\mu,\nu}^{3}\stackrel{?}{=}e(D_{\nu,2},h_{2})^{-z_{\mu,\nu}^{2}}(\frac{e(D_{\nu,2},W'_{2})}{e(h_{2},h_{2})})^{c_{\mu,\nu}^{2}}$, $L'_{\mu,\nu}\stackrel{?}{=} e(K_{\mu,\nu},\mathfrak{h}_{\mu,\nu})\cdot L_{\mu,\nu}^{c_{\mu,\nu}^{2}}$,  $\tilde{H}\stackrel{?}{=}e(\mathfrak{g},\mathfrak{h}_{\mu,\nu})\cdot H^{c_{\mu,\nu}^{2}}$.
\end{enumerate}

\end{document}